\documentclass[journal,twoside,web]{ieeecolor}
\usepackage{generic}
\usepackage{cite}
\usepackage{amsmath,amssymb,amsfonts}
\usepackage{graphicx}
\usepackage{hyperref}
\hypersetup{hidelinks=true}
\usepackage[acronym]{glossaries}
\usepackage{xcolor}
\usepackage{caption}
\usepackage{subcaption}

\def\wi{w_\mathrm{ini}}
\def\yi{y_\mathrm{ini}}
\def\ui{u_\mathrm{ini}}
\def\uf{u_\mathrm{f}}
\def\yf{y_\mathrm{f}}
\def\ur{u_\mathrm{ref}}
\def\yr{y_\mathrm{ref}}
\def\Ti{T_\mathrm{ini}}
\def\Tf{T_\mathrm{f}}
\def\Wd{W}
\def\wfree{w_\mathrm{free}}
\def\wdep{w_\mathrm{dep}}
\def\Se{\Sigma^\mathrm{e}}
\def\Seh{\hat{\Sigma}^\mathrm{e}}
\def\Sfull{\Sigma^\mathrm{w}}
\def\Sfullh{\hat{\Sigma}^\mathrm{w}}
\def\Sp{\Sigma_\mathrm{pred}}
\def\Sph{\hat{\Sigma}_\mathrm{pred}}

\newtheorem{thm}{Theorem}
\newtheorem{ass}{Assumption}

\newtheorem{prop}{Proposition}
\newtheorem{definition}{Definition}
\newtheorem{lemma}{Lemma}
\newtheorem{rmk}{Remark}

\newacronym{LTI}{LTI}{linear time-invariant}
\newacronym{SPC}{SPC}{\textit{subspace predictive control}}
\newacronym{DeePC}{DeePC}{data-enabled predictive control}
\newacronym{MPC}{MPC}{model predictive control}
\newacronym{LMI}{LMI}{linear matrix inequality}
\newacronym{HVAC}{HVAC}{heating, ventilation, and air conditioning}
\newacronym{EIV}{EIV}{errors-in-variables}
\newacronym{VaR}{VaR}{value at risk}
\newacronym{CVaR}{CVaR}{conditional value at risk}

\def\BibTeX{{\rm B\kern-.05em{\sc i\kern-.025em b}\kern-.08em
    T\kern-.1667em\lower.7ex\hbox{E}\kern-.125emX}}
\begin{document}
\title{Gaussian behaviors \\ and stochastic data-driven control}

\author{Andr\'as Sasfi, \IEEEmembership{Student Member, IEEE} Alberto Padoan, \IEEEmembership{Member, IEEE}, Ivan Markovsky, \IEEEmembership{Member, IEEE}, Florian D\"orfler, \IEEEmembership{Fellow, IEEE}
\thanks{A. Sasfi, and F. D\"orfler are with the Department of Information Technology and
Electrical Engineering, ETH Z\"urich, 8092 Z\"urich, Switzerland (e-mail: \{asasfi, doerfler\}@control.ee.ethz.ch).
A. Padoan is with the Department of Electrical \& Computer Engineering, University of British Columbia (e-mail: apadoan@ece.ubc.ca).
I. Markovsky is with the Catalan Institution for Research and Advanced Studies, 08010 Barcelona, Spain, and also with the Centre Internacional de M\`etodes Num\`erics en Enginyeria, 08034 Barcelona, Spain (e-mail: imarkovsky@cimne.upc.edu). This work was supported by the SNF/FW WEAVE Project 200021E\_20397 and the FWO WEAVE Project G033822N}
}

\maketitle

\begin{abstract}
We propose a stochastic behavioral modeling framework, termed Gaussian behaviors, which augments a deterministic \gls{LTI} behavior with a Gaussian noise component.
We show that this notion is a tractable subclass of stochastic behaviors and encompasses classical parametric stochastic \gls{LTI} state-space system models as special cases. 
Analogously to deterministic \gls{LTI} behaviors, the framework enables simple and tractable stochastic data-driven control methods. 
To this end, we obtain a method for prediction by conditioning the Gaussian behavior on the known part of the trajectory, which is identified directly from the sample covariance of trajectory data.
Building on this method, we develop predictive control formulations that optimize over feedforward or disturbance affine feedback policies.
The resulting formulations are shown to be convex.
We further derive a finite-sample confidence bound on the prediction accounting for both aleatoric and epistemic uncertainty, and incorporate it into a robust control method, for which a tractable convex upper bound is obtained.
Within this framework, subspace predictive control is recovered when only the mean prediction is used, while data-enabled predictive control is shown to account for the prediction uncertainty in an optimistic fashion.
Numerical case studies illustrate the benefits of the proposed methods.
\end{abstract}

\begin{IEEEkeywords}
Data-driven predictive control, behavioral systems theory, stochastic systems
\end{IEEEkeywords}

\section{Introduction} \label{sec:introduction}
Data-driven predictive control~\cite{coulson2019data,coulson2021distributionally,berberich2020data,breschi2023data,depersis2019formulas} has recently emerged as an alternative to \gls{MPC}~\cite{rawlings2017model}.
By bypassing the need for an explicitly identified model, which is often the most expensive step of the control design process~\cite{dorfler2022bridging}, these methods synthesize control policies directly from measured data.
The theoretical backbone for these approaches is behavioral systems theory~\cite{willems1997introduction}.
Instead of relying on a particular representation, behavioral systems theory describes systems as sets of trajectories, which correspond to shift-invariant subspaces in the \acrfull{LTI} setting.
This perspective enables a direct data-driven representation of \gls{LTI} systems using matrices of time-series data~\cite{willems2005note}, making it particularly suited for formulating data-driven control methods.
When augmented with regularization, these algorithms perform well even in scenarios involving large amounts of uncertainty~\cite{dorfler2022bridging,markovsky2021behavioral}.
However, because the underlying behavioral model is fundamentally deterministic, most existing works a posteriori robustify the control design~\cite{huang2021quadratic,coulson2021distributionally,berberich2020data}.
This discrepancy highlights the need for extending behavioral systems theory to the stochastic setting, which is also emphasized in the  recent survey~\cite{markovsky2021behavioral}. 
Thus, the goal of this work is to provide such an extension while enabling the design of simple and pragmatic stochastic data-driven control formulations, similarly to the deterministic case.

An early attempt to establish a stochastic extension of behavioral systems theory is~\cite{willems2012open}, which is limited to static systems.
This work was subsequently extended to dynamic systems in~\cite{baggio2017lti}.
While both works are mathematically elegant, they are not immediately useful for and have also not successfully translated into actionable stochastic data-driven control formulations.
Alternatively,~\cite{faulwasser2023behavioral,pan2022stochastic} extend behavioral systems theory to the stochastic domain using polynomial chaos expansions.
However, the computational complexity of these techniques increases rapidly with the order of the expansion, which limits their scalability for practical applications.

Beyond the behavioral perspective, stochastic data-driven control has been broadly addressed using parametric models such as input-output~\cite{chiuso2025harnessing} or state-space representations~\cite{fiedler2023probabilistic}.
In~\cite{fiedler2023probabilistic}, a probabilistic multi-step prediction model is identified via maximum likelihood techniques.
This approach quantifies parametric uncertainty arising from finite data and employs the resulting predictive distribution within a chance-constrained stochastic \gls{MPC} framework. 
Alternatively,~\cite{chiuso2025harnessing} derives the expected control cost conditioned on the data in closed form.
This work establishes a separation principle where the ARX parameter vector and its associated uncertainty are estimated first, followed by the minimization of a cost function comprising a certainty equivalence term and an uncertainty induced regularizer.
Despite their contributions, these methods have the following limitations. 
First, both approaches are fundamentally risk neutral, minimizing the expected control cost without accounting for the variance or worst case realizations of the objective.
Furthermore, the work in~\cite{chiuso2025harnessing} enforces constraints solely on the mean prediction, agnostic of any risk.
Second, both frameworks optimize over feedforward input sequences exclusively, which leads to inherently conservative predictions.
Similar limitations apply to the signal matrix model approach of~\cite{smith2024optimal}, which derives the best linear unbiased multi-step predictor and its covariance directly from data, yet employs only the mean prediction with feedforward inputs.
In traditional stochastic MPC, this conservatism is often mitigated by optimizing over disturbance affine feedback policies~\cite{goulart2006optimization}.
Yet, this technique remains largely unexplored in data-driven contexts.
The notable exception is~\cite{li2024distributionally}, which requires known noise covariance contradicting the core motivation behind data-driven control design.
The proposed stochastic control methods, relying on our novel stochastic behavioral framework, address these limitations.

The main contributions of our work are the following.
\begin{itemize}
    \item First, we introduce a pragmatic and parsimonious stochastic modeling framework within the behavioral setting, termed \textit{Gaussian behaviors}.
    A Gaussian behavior consists of a deterministic \gls{LTI} behavior that describes all nominal trajectories, alongside a Gaussian noise component to capture random deviations.
    We show that this framework constitutes a tractable subclass of abstract stochastic behaviors defined in~\cite{willems2012open,baggio2017lti}, while encompassing existing parametric stochastic modeling frameworks.

    \item 
    Second, we derive a simple linear method for prediction, identified directly from the sample covariance of trajectory data by conditioning on the known portion of the trajectory.
    While a Gaussian behavior is not uniquely identifiable from data, this predictive distribution is.
    Its mean corresponds to the established subspace predictor~\cite{favoreel1999spc}, while its covariance captures the prediction uncertainty. 
    In addition to this \emph{aleatoric} uncertainty, we also characterize the \emph{epistemic} uncertainty arising from finite data, following a logic similar to~\cite{fiedler2023probabilistic}.
    Finally, we demonstrate that this framework can seamlessly incorporate uncertain forecasts of external variables, which is a crucial benefit for predictive control applications.
   
    \item Third, we propose a stochastic data-driven control formulation that leverages the aleatoric uncertainty in the Gaussian behavior and optimizes over disturbance affine feedback policies.
    We show that this stochastic scheme can be reformulated as a deterministic convex optimization problem.
    This approach leads to a fully data-driven method that reduces the conservatism of predictions.
    The practical benefit of using disturbance affine feedback policies is showcased on a building control case study.
    
    \item
    Fourth, we propose a robust formulation that explicitly accounts for both aleatoric and epistemic uncertainty in the prediction by minimizing the worst-case cost over a confidence ellipsoid.
    The resulting min-max problem provides an upper bound on the \gls{VaR} of the control cost.
    Through the use of Lagrangian duality, we obtain a tractable convex surrogate that overapproximates the original min-max problem, leading to an efficient and robust data-driven control formulation.
    Conversely, we show that an optimistic standpoint, leading to a min-min problem, recovers the \gls{DeePC} method~\cite{coulson2019data} with projection-based regularization, endowing regularization in data-driven control with a stochastic interpretation.
    A case study confirms that our robust method yields a more cautious controller than existing schemes.
\end{itemize}

This article expands upon our preliminary work in~\cite{sasfi2025gaussian} with several changes and additions.
The definition of Gaussian behaviors has been modified to align better with established stochastic behavioral notions.
Compared to~\cite{sasfi2025gaussian}, the system identification section has been expanded by the finite-sample confidence bound on the prediction accounting for epistemic uncertainty, and the methodology for incorporating uncertain forecasts. 
The stochastic control formulations have also been extended to include chance constraints and optimization over disturbance affine feedback policies. 
Moreover, the robust formulation has been reworked to rely on the finite-sample confidence bound, making it fundamentally different from the preliminary version. 
Finally, opposed to~\cite{sasfi2025gaussian}, we provide multiple case studies to illustrate the practical benefits of our framework.

The remainder of the paper is organized as follows.
Section~\ref{sec:prelim} introduces notation and reviews behavioral systems theory, stochastic state-space models, and data-driven predictive control. 
Section~\ref{sec:Gaussian_behaviors} defines Gaussian behaviors and relates them to existing modeling frameworks.
Section~\ref{sec:GB_sysID} addresses identification and prediction, including the finite-sample uncertainty bound and the treatment of uncertain forecasts. 
Section~\ref{sec:CE_control} develops chance-constrained formulations under aleatoric uncertainty, both with feedforward inputs and with disturbance affine feedback policies. 
Section~\ref{sec:uncertain_control} presents the robust formulation under both aleatoric and epistemic uncertainty.
Section~\ref{sec:case_studies} reports the numerical case studies, and Section~\ref{sec:conclusion} concludes the paper.

\section{Preliminaries} \label{sec:prelim}
We first provide preliminaries on notation, stochastic state-space models, behavioral systems theory, and data-driven predictive control for linear systems.

\subsection{Notation}
The probability of an event and the expectation of a random variable are denoted by $\mathrm{Pr}(\cdot)$ and $\mathbb{E}[\cdot]$, respectively.
The multivariate Gaussian distribution with mean $\mu$ and covariance matrix $\Sigma$ is denoted by $\mathcal{N}(\mu, \Sigma)$.
The fact that a symmetric matrix $P$ is positive definite (semi-definite), is denoted by $P \succ 0$ ($P \succeq 0$).
For $P \succeq 0$, the weighted semi-norm of a vector $x$ is $\|x\|_P^2 = x^\top P x$, which is a norm if $P \succ 0$.
The trace of a matrix is denoted by $\mathrm{tr}(\cdot)$, the image (column space) by $\mathrm{im}(\cdot)$ and the Moore–Penrose pseudoinverse by $(\cdot)^\dagger$.

\subsection{Stochastic LTI state-space systems}
Consider a discrete-time stochastic \gls{LTI} state-space system governed by the equations
\begin{align} \label{eq:stochastic-SS}
\begin{split}
    x_{t+1} &= Ax_t + Bu_t + \xi_t,\\
    y_t & = Cx_t + D u_t + \eta_t,
\end{split}
\end{align}
where $x_t,~u_t$, and $y_t$ are the state, input, and output; and the random vectors $\xi_t\in\mathbb{R}^n$ and $\eta_t\in\mathbb{R}^p$ represent the process and measurement noise, respectively.
We assume that the sequences $\{\xi_t\}$ and $\{\eta_t\}$ are independent and identically distributed with zero mean and respective covariance matrices $\Sigma^\xi$ and $\Sigma^\eta$.
Here, the initial state $x_0$ and the input $u$ may also be random.
Over a finite time horizon of length $L$, the stacked output sequence $y=\begin{bmatrix}y_t^\top&y_{t+1}^\top&\dots&y_{t+L-1}^\top\end{bmatrix}^\top$ forms a random vector.
It can be expressed as an affine function of the state $x_t$ alongside the stacked trajectories of the inputs $u$ and the noise realizations $\xi$ and $\eta$ (defined analogously to $y$):
\begin{align} \label{eq:stochastic_ss_linear}
    y = \mathcal{O}_L x_t + \mathcal{T}_L^\mathrm{u} u + \mathcal{T}_L^\xi \xi + \eta.
\end{align}
Here, the extended observability matrix $\mathcal{O}_L$ and the lower-triangular Toeplitz matrix $\mathcal{T}_L^\mathrm{u}$ are defined as
\small
\begin{align*}
    \mathcal{O}_L = \begin{bmatrix}
        C \\ CA \\ \vdots \\ CA^{L-1}
    \end{bmatrix},~ \mathcal{T}_L^\mathrm{u} = 
    \begin{bmatrix}
      D & 0 & 0 & \dots & 0 \\
        CB & D & 0 & \dots & 0 \\
        CAB & CB & D &\dots  & 0\\
        \vdots & \vdots & \ddots & \ddots & \vdots \\
        CA^{L-2}B & \dots & \dots & CB & D   
  \end{bmatrix},
\end{align*}
\normalsize
while the matrix $\mathcal{T}_L^\xi$ is constructed identically to $\mathcal{T}_L^\mathrm{u}$, but with $B$ replaced by the identity matrix and $D$ replaced by $0$.

\subsection{Behavioral systems theory for deterministic LTI systems} \label{sec:BST}
In the behavioral framework, a dynamical system is defined by the set of its trajectories, called the \textit{behavior}, without relying on a specific representation.
We consider discrete-time system trajectories of finite length $L > 0$, defined as a collection of signals $w_t$ (typically consisting of inputs and outputs) as
$$
w_{[t,t+L-1]} = (w_t~w_{t+1}~\dots~w_{t+L-1}),
$$
or simply $w$ for brevity.
For an \gls{LTI} system, the restricted behavior, $\mathcal{B}_L$, comprises all valid length-$L$ trajectories, informally defined as
\begin{align*}
    \mathcal{B}_L = \{w\in \mathbb{R}^{qL}~|~w \text{ is a length-$L$ trajectory of the system}\},
\end{align*}
For a complete \gls{LTI} system, the behavior is a shift-invariant subspace within the ambient space of all potential trajectories~\cite{markovsky2022data}.
Provided that the horizon $L$ is sufficiently large, $\mathcal{B}_L$ is a subspace of dimension $d = mL + n$ in $\mathbb{R}^{qL}$, where $m$ and $n$ denotes the number of inputs and the order of the system~\cite{willems1997introduction}, respectively.

A basis for $\mathcal{B}_L$ can be derived from a deterministic \gls{LTI} state-space model~\cite{markovsky2022data}.
Alternatively, deterministic \gls{LTI} behaviors can be represented purely from sufficiently rich data.
Suppose we collect $D > qL$ length-$L$ trajectories from the system, denoted as $w^\mathrm{d}_i\in\mathbb{R}^{qL}$ for $i\in\{1,\dots,D\}$, and arrange them into the data matrix $\Wd = [w^\mathrm{d}_1~\dots~w^\mathrm{d}_D]$.
As established by an extension of the Fundamental Lemma~\cite{van2020willems}, if the input sequences within this dataset are \textit{collectively persistently exciting} and the system is controllable, then the matrix $\Wd$ has rank $mL + n$.
Consequently, the column space of the data matrix spans the restricted behavior, that is, $\mathcal{B}_L = \mathrm{im}(\Wd)$.

\subsection{Data-driven predictive control formulations} \label{sec:prelim-dd-control}
Several data-driven predictive control approaches leverage the pragmatic data representation of the behavior, namely $\mathcal{B}_L = \mathrm{im}(\Wd)$~\cite{coulson2019data,favoreel1999spc,breschi2023data,depersis2019formulas}. 
Without loss of generality, any trajectory $w_t \in \mathbb{R}^q$ of a complete \gls{LTI} system can be partitioned into an input component $u_t \in \mathbb{R}^m$ and an output component $y_t \in \mathbb{R}^{p}$, with $q=m+p$.
To incorporate real-time measurements, the input and output trajectories of length $L = \Ti + \Tf$ are divided into an initial and a future segment of length $\Ti$ and $\Tf$, denoted by $\ui,~\yi$ and $\uf,~\yf$, respectively. 
The initial segment $\wi = \begin{bmatrix}
    \ui^\top & \yi^\top
\end{bmatrix}^\top$ is known and the future part is optimized.
The full trajectory is then $w = \begin{bmatrix} \wi^\top & \uf^\top & \yf^\top\end{bmatrix}^\top$. 
The predictive control objective is typically formulated as the minimization of the quadratic cost
\begin{align*}
    J(\uf,\yf) = \|\uf-\ur\|_R^2 + \|\yf-\yr\|_Q^2,
\end{align*}
where $\ur$ and $\yr$ denote reference sequences, while $R\succ 0$ and $Q\succ 0$ are weight matrices for the future input and output sequences, respectively.
In addition, the future input and output trajectories are often required to satisfy constraints defined by a non-empty, closed, and convex sets $\mathcal{U}$ and $\mathcal{Y}$.

We briefly summarize two established data-driven predictive control frameworks.
To this end, let us partition the historical data matrix $\Wd$ into block rows corresponding to the components $\wi$, $\uf$, and $\yf$, yielding $\Wd = \begin{bmatrix} W_\mathrm{p}^\top & U_\mathrm{f}^\top & Y_\mathrm{f}^\top\end{bmatrix}^\top$. 
The \gls{SPC} scheme~\cite{favoreel1999spc,markovsky2022data} is formulated as
\begin{align} \label{eq:SPC}
\begin{split}
    \min_{\uf\in \mathcal{U},\yf \in \mathcal{Y}} & \quad  J(\uf,\yf) \\
    \mathrm{s.t.} & \quad \yf = Y_\mathrm{f} \begin{bmatrix}
        W_\mathrm{p} \\ U_\mathrm{f}
    \end{bmatrix}^{\dagger} \begin{bmatrix}
        \wi \\ \uf
    \end{bmatrix},
\end{split}
\end{align}
where the constraint is known as the \textit{subspace predictor}.
Alternatively, \gls{DeePC}~\cite{dorfler2022bridging} solves the optimization problem
\begin{align} \label{eq:DeePC}
\begin{split}
    \min_{\uf \in \mathcal{U},\yf\in\mathcal{Y},g\in\mathbb{R}^{D}} & \quad J(\uf,\yf) + \lambda_g \cdot h(g) \\
    \mathrm{s.t.} & \quad \begin{bmatrix}
        \wi \\ \uf \\ \yf
    \end{bmatrix} = \begin{bmatrix}
        W_\mathrm{p} \\ U_\mathrm{f}\\ Y_\mathrm{f}
    \end{bmatrix} g,
\end{split}
\end{align}
where $h(g)$ is a regularizer scaled by the regularization parameter $\lambda_g \geq 0$.
Standard choices for $h(g)$ include the $\ell_1$-norm, the squared $\ell_2$-norm, and the projection-based regularizer~\cite{markovsky2022data,dorfler2022bridging} 
\begin{align} \label{eq:proj-based-reg}
h(g) = \left\|\left(I-\begin{bmatrix}
    W_\mathrm{p} \\ U_\mathrm{f}
\end{bmatrix}^\dagger \begin{bmatrix}
    W_\mathrm{p} \\ U_\mathrm{f}
\end{bmatrix}\right)g\right\|_2^2.
\end{align}

\section{Gaussian Behaviors} \label{sec:Gaussian_behaviors}
We present a pragmatic and parsimonious modeling framework termed \textit{Gaussian behavior} for discrete-time stochastic dynamical systems that is amenable to data-driven control, followed by comparisons with other stochastic models.

\subsection{Definition}
In the spirit of behavioral systems theory, our approach relies on a trajectory-based perspective.
We incorporate stochasticity as a fundamental layer over a nominal subspace representation of an \gls{LTI} system originating from deterministic behavioral systems theory.
To ensure practical applicability and enable tractable control formulations, we consider this stochastic innovation to be stationary and normally distributed.
In fact, our proposed Gaussian behaviors are a tractable subclass of stochastic behaviors introduced in~\cite{willems2012open,baggio2017lti}, see Subsection~\ref{sec:GB_relations_BST} below.
Because the problem setting is finite horizon, we bypass infinite-horizon technicalities and provide the definition directly for finite-length trajectories, motivating the following formulation of Gaussian behaviors.

\begin{definition} \label{def:GB}
A \textit{(finite-horizon \gls{LTI}) Gaussian behavior} is a pair $(\mathcal{B}_L,\Se)$, where
\begin{itemize}
    \item $\mathcal{B}_L\subseteq \mathbb{R}^{qL}$ is a deterministic finite-horizon \gls{LTI} behavior as defined in Section~\ref{sec:BST}, and
    \item $\Se \in \mathbb{R}^{qL\times qL}$ is a positive semi-definite covariance matrix that describes the exogenous stochasticity in the trajectories.
\end{itemize}
Furthermore, any random length-$L$ trajectory $w_{[t,t+L-1]}$ of the stochastic system has a decomposition
$$
w_{[t,t+L-1]} = \overline{w}_{[t,t+L-1]} + e,
$$
where $\overline{w}_{[t,t+L-1]}\in\mathcal{B}_L$ and $e\sim \mathcal{N}(0,\Se)$.
\end{definition}

According to Definition~\ref{def:GB}, any trajectory is naturally decomposed into a nominal component $\overline{w}$ and a stochastic deviation $e$.
As in the classical deterministic behavioral systems theory~\cite{willems1997introduction}, the nominal trajectory $\overline{w}$ is constrained to the deterministic behavior (the linear subspace $\mathcal{B}_L$).
The nominal component $\overline{w}\in\mathcal{B}_L$ may itself be stochastic, but its probability law is not defined by the Gaussian behavior.
This randomness is not specific to the system, as it originates from random inputs (e.g., for excitation) or initial conditions (e.g., in an estimation or covariance steering setting).
This type of stochasticity does not cause deviations from the nominal behavior, i.e.,  $\overline{w}$ is supported on $\mathcal{B}_L$ for any starting time $t$.
We will refer to this randomness as the \textit{endogenous} stochasticity in the trajectories.

The other component of the Gaussian behavior, $\Se$, describes what we call the \textit{exogenous} stochasticity, which quantifies the randomness that causes the trajectories to deviate from the nominal subspace.
The decomposition $w_{[t,t+L-1]} = \overline{w}_{[t,t+L-1]} + e$ implies that the same noise perturbation is added to the nominal component that can be anywhere on the nominal subspace $\mathcal{B}_L$.
Thus, this decomposition suggests that the stochasticity is defined on the quotient space $\mathbb{R}^{qL}/\mathcal{B}_L$, similarly to~\cite{willems2012open,baggio2017lti}.
Furthermore, the statistics of $e$ are independent of the starting time $t$ of the trajectory, i.e., $e\sim\mathcal{N}(0,\Se)$ for any $t\geq0$.
For example, when using a state-space representation of the stochastic system, the exogenous stochasticity corresponds to process and measurement noise.
By capturing the system's exogenous noise in this manner, the model explicitly quantifies uncertainty while maintaining a simple linear-algebraic structure, thereby enabling the tractable implementation of downstream tasks such as predictive control, see Sections~\ref{sec:CE_control} and \ref{sec:uncertain_control}.

The decomposition into a nominal part $\overline{w} \in \mathcal{B}_L$ and a noise term $e \sim \mathcal{N}(0, \Se)$ is not unique, since the subspace  $\mathcal{B}_L$ and the support of $e$ may intersect.
Correspondingly, a Gaussian behavior is not uniquely identifiable from data, which we resolve in Section~\ref{sec:GB_sysID}.
The proposed formulation is consistent with established modeling frameworks for stochastic dynamical systems, as we demonstrate next.

\subsection{Relation to stochastic LTI state-space representations} \label{sec:GB_relations_SS}
We now show that a Gaussian behavior can be constructed from a stochastic state-space representation. 
Assuming the input-output partitioning\footnote{Here, we order the trajectory as $w = \begin{bmatrix}
    u^\top & y^\top
\end{bmatrix}^\top$, which is inconsistent with the ordering used in previous sections.
Formally, we work with the permuted trajectory $\Pi w$, where $\Pi$ is a permutation matrix, but for simplicity we continue to denote it by $w$.}
$w = \begin{bmatrix}
    u^\top & y^\top
\end{bmatrix}^\top$, \eqref{eq:stochastic_ss_linear} shows that the entire trajectory can be expressed as
\begin{align} \label{eq:w_state_space}
w = \underbrace{\begin{bmatrix} I & 0 \\ \mathcal{T}_L^\mathrm{u} & \mathcal{O}_L \end{bmatrix} \begin{bmatrix} u \\ x_t \end{bmatrix}}_{\overline{w}} + \underbrace{\begin{bmatrix} 0 \\ \mathcal{T}_L^\xi \xi + \eta \end{bmatrix}}_{e}.
\end{align}
If the representation~\eqref{eq:stochastic-SS} is observable, $\overline{w}$ is restricted to a subspace of dimension $d=mL+n$~\cite{markovsky2022data}. 
Assuming $L$ is larger than the lag of the system~\cite{markovsky2022data}, this subspace corresponds to the deterministic \gls{LTI} behavior, i.e.,
$$
\mathcal{B}_L = \mathrm{im}\left( \begin{bmatrix} I & 0 \\ \mathcal{T}_L^\mathrm{u} & \mathcal{O}_L \end{bmatrix}\right).
$$
In the decomposition~\eqref{eq:w_state_space}, the nominal component $\overline{w}$ is stochastic due to randomness in the initial state (and possibly in the input), yet it is supported on the nominal subspace $\mathcal{B}_L$.
Furthermore, the exogenous noise component $e$ is a zero-mean Gaussian random vector with covariance matrix
$$
\Se = \begin{bmatrix}
            0 & 0 \\
            0 & \mathcal{T}_L^\xi(I_L \otimes \Sigma^\xi)(\mathcal{T}_L^\xi)^\top + (I_L\otimes \Sigma^\eta)
        \end{bmatrix},
$$
where $\otimes$ and $I_L$ denotes the Kronecker product and the identity matrix of size $L\times L$, respectively.
This covariance is independent of the trajectory's starting time $t$, as required by Definition~\ref{def:GB}. Note that the single-step model~\eqref{eq:stochastic-SS} imposes a special structure on $\Se$. 
Furthermore, if $\Se$ has a block-Toeplitz structure, the exogenous component $e \sim \mathcal{N}(0, \Se)$ is a length-$L$ window of a zero-mean stationary Gaussian process via Kolmogorov's extension theorem~\cite{papoulis1991probability}.
Importantly, we do not impose any structure on $\Se$ other than positive semi-definiteness in Definition~\ref{def:GB}, which yields a more expressive framework, as illustrated in Sections~\ref{sec:forecast} and~\ref{sec:HVAC}.

\subsection{Relation to deterministic and stochastic LTI behaviors} \label{sec:GB_relations_BST}
In the deterministic setting, the nominal part of the trajectory is deterministic in Definition~\ref{def:GB}, and it is restricted to the subspace, that is, $\overline{w}\in\mathcal{B}_L$.
Therefore, deterministic \gls{LTI} behaviors can be trivially recovered by setting the exogenous stochasticity to zero, i.e., $\Se = 0$.

Definitions of stochastic systems in the behavioral setting were established in~\cite{willems2012open} and~\cite{baggio2017lti}.
The seminal work~\cite{willems2012open} treats \textit{static} systems only, while~\cite{baggio2017lti} provides an extension of the definition to the \textit{dynamical} setting. 
Since our work focuses on control applications later on, we relate our work to stochastic \textit{dynamical} systems.
More specifically, we show that our notion of Gaussian behaviors defines a special class of stochastic dynamical systems as defined in~\cite{baggio2017lti}.

The stochastic behavioral framework established in~\cite{baggio2017lti} (building on the notions in~\cite{willems2012open}) integrates a probability space structure into the set of possible system trajectories.
This general approach encompasses both deterministic behaviors and conventional stochastic processes~\cite{doob1953stochastic} as specific instances.
In this context, a \emph{stochastic dynamical system} is formally defined by the quadruple 
$$
(\mathbb{T}, \mathbb{W}, \mathcal{F}, \mathcal{P}),
$$
where:
\begin{itemize}
    \item $\mathbb{T}$ is the \emph{time axis} (e.g., $\mathbb{T} = \mathbb{Z}$ or $\mathbb{T} = \mathbb{R}$),
    \item $\mathbb{W}$ is the \emph{signal set} (e.g., $\mathbb{W} = \mathbb{R}^q$ or $\mathbb{W} = \mathbb{C}^q$),
    \item $\mathcal{F}$ represents a \emph{$\sigma$-algebra} of events, which is a collection of measurable subsets of the trajectory space $\mathbb{W}^\mathbb{T}$,
    \item $\mathcal{P}$ is a \emph{probability measure} on $(\mathbb{W}^\mathbb{T}, \mathcal{F})$ that assigns probabilities to events, which are entire sets of trajectories.
\end{itemize}
Building upon this general framework, an LTI stochastic process is defined by restricting the system to a discrete time axis ($\mathbb{T} = \mathbb{Z}$) and a real-valued finite-dimensional signal space ($\mathbb{W} = \mathbb{R}^q$).
The defining characteristic of this process is the existence of a deterministic, linear time-invariant behavior $\mathcal{B} \subset (\mathbb{R}^q)^{\mathbb{Z}}$, which serves as the \emph{fiber} of the system.
Under this formulation, the $\sigma$-algebra of events is constructed from the Borel subsets of the quotient space $(\mathbb{R}^q)^{\mathbb{Z}}/\mathcal{B}$, and the probability measure $\mathcal{P}$ is defined directly over this quotient space.
The meaning of defining the probability measure on the quotient space was made more concrete in both~\cite{willems2012open} and~\cite{baggio2017lti} by considering a kernel representation $\mathbf{R}(\sigma)$ of the behavior $\mathcal{B}$~\cite{markovsky2022data}.
Then, random trajectories of the system satisfy
\begin{align} \label{eq:baggio_kernel_repr}
\mathbf{R}(\sigma)\cdot w_t = \varepsilon_t,
\end{align}
where $\mathbf{R}(\sigma)$ is a kernel representation of the behavior~\cite{markovsky2022data}, and $\varepsilon_t$ is a stochastic process.

Our proposed definition of a Gaussian behavior serves as a practical, finite-length specialization of this concept.
Specifically, we avoid the abstract measure-theoretic technicalities associated with defining $\sigma$-algebras by restricting the stochasticity to Gaussian distributions and implicitly assuming a Borel $\sigma$-algebra.
In our framework, the restricted deterministic \gls{LTI} behavior $\mathcal{B}_L$ assumes the role of the fiber.
Definition~\ref{def:GB} can be read as an admissibility condition on the law of a trajectory: letting $\mu$ denote the law of a length-$L$ trajectory $w$, we say $\mu$ is admissible for $(\mathcal{B}_L, \Se)$, if there exist $\overline{w} \in \mathcal{B}_L$ a.s. and $e \sim \mathcal{N}(0, \Se)$ with $\overline{w} + e \sim \mu$.
The following proposition shows that this decomposition is equivalent to the kernel formulation~\eqref{eq:baggio_kernel_repr} used in~\cite{willems2012open, baggio2017lti}.

\begin{prop} \label{prop:kernel_equivalence}
Let $(\mathcal{B}_L, \Se)$ be a Gaussian behavior and let
$\mathbf{R} \in \mathbb{R}^{(qL-d) \times qL}$ be a kernel representation of the restricted behavior $\mathcal{B}_L$.
A trajectory $w \in \mathbb{R}^{qL}$ admits the decomposition of Definition~\ref{def:GB} if and only if
\begin{align*}
\mathbf{R} w = \varepsilon, \qquad \varepsilon \sim \mathcal{N}\!\left(0,\, \mathbf{R} \Se \mathbf{R}^\top\right).
\end{align*}
\end{prop}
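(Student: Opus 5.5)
We read the statement at the level of laws, as in the admissibility interpretation given just before Proposition~\ref{prop:kernel_equivalence}. We prove the two implications separately, using only that $\ker \mathbf{R} = \mathcal{B}_L$.

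\emph{Necessity.} Suppose $w = \overline{w} + e$ with $\overline{w} \in \mathcal{B}_L$ almost surely and $e \sim \mathcal{N}(0,\Se)$. Then $\mathbf{R}\overline{w} = 0$ almost surely, so $\mathbf{R} w = \mathbf{R} e$ almost surely. Since $\mathbf{R} e$ is a linear image of a Gaussian vector, $\varepsilon := \mathbf{R} w \sim \mathcal{N}(0, \mathbf{R}\Se\mathbf{R}^\top)$. This direction is a one-line computation.

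\emph{Sufficiency.} Assume $\mathbf{R} w \sim \mathcal{N}(0, S)$ with $S := \mathbf{R}\Se\mathbf{R}^\top$. We must build, possibly on an enlarged probability space, a vector $e \sim \mathcal{N}(0,\Se)$ with $\mathbf{R} e = \mathbf{R} w$ almost surely. Then $\overline{w} := w - e$ satisfies $\mathbf{R}\overline{w} = 0$, so $\overline{w} \in \mathcal{B}_L$ almost surely.

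The construction mimics Gaussian conditioning. Let $K := \Se\mathbf{R}^\top S^\dagger$ and let $z \sim \mathcal{N}(0, \Se - K S K^\top)$ be independent of $w$. The covariance $\Se - \Se\mathbf{R}^\top S^\dagger \mathbf{R}\Se$ is a generalized Schur complement of the positive semi-definite joint covariance of $(e,\mathbf{R}e)$, hence positive semi-definite. Define
\begin{align*}
e := K\,\mathbf{R} w + z .
\end{align*}

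\emph{Law of $e$.} The vector $e$ is Gaussian, since $\mathbf{R}w$ and $z$ are independent Gaussians, and it has zero mean. Its covariance is
\begin{align*}
K S K^\top + \Se - K S K^\top = \Se .
\end{align*}
Hence $e \sim \mathcal{N}(0,\Se)$.

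\emph{The identity $\mathbf{R}e = \mathbf{R}w$.} We have $\mathbf{R} e = S S^\dagger \mathbf{R} w + \mathbf{R} z$, and we treat the two terms in turn.
\begin{itemize}
\item Since $\mathbf{R} w$ has covariance $S$, it lies in $\mathrm{im}(S)$ almost surely. The matrix $SS^\dagger$ is the orthogonal projector onto $\mathrm{im}(S)$, so $SS^\dagger \mathbf{R} w = \mathbf{R} w$ almost surely.
\item The covariance of $\mathbf{R} z$ is $S - S S^\dagger S = 0$. Hence $\mathbf{R} z = 0$ almost surely.
\end{itemize}
Together these give $\mathbf{R} e = \mathbf{R} w$ almost surely, which completes the sufficiency direction.

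The main obstacle is this coupling step in the converse. The statement only specifies the law of $\mathbf{R} w$, so an auxiliary independent $z$ is needed, which requires enlarging the probability space. Moreover $S$ may be singular, so the pseudoinverse identities $S S^\dagger S = S$ and $\mathrm{im}(\mathbf{R} w) \subseteq \mathrm{im}(S)$ almost surely must be handled carefully rather than simply inverting $S$.
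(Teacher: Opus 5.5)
Your proof is correct and follows the paper's argument essentially verbatim. Necessity is the same one-line computation. For sufficiency you use the same coupling $e := \Se\mathbf{R}^\top S^\dagger \mathbf{R}w + z$ with $z \sim \mathcal{N}(0, \Se - \Se\mathbf{R}^\top S^\dagger\mathbf{R}\Se)$ independent of $w$; your $KSK^\top$ equals the paper's term because $S^\dagger S S^\dagger = S^\dagger$. You also supply the details the paper leaves implicit: positive semi-definiteness of this covariance, $SS^\dagger\mathbf{R}w = \mathbf{R}w$ almost surely, and $\mathbf{R}z = 0$ almost surely.
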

\vspace{2mm}
\begin{proof}
($\Rightarrow$) If $w = \overline{w} + e$ with $\overline{w} \in \mathcal{B}_L$ and $e \sim \mathcal{N}(0, \Se)$, then $\mathbf{R}\overline{w} = 0$, so
$\mathbf{R}w = \mathbf{R}e \sim \mathcal{N}(0, \mathbf{R}\Se \mathbf{R}^\top)$.\\
($\Leftarrow$) Given any $w$ with $\mathbf{R}w \sim \mathcal{N}(0, \mathbf{R}\Se \mathbf{R}^\top)$, set $S := \mathbf{R}\Se \mathbf{R}^\top$, take $\zeta \sim \mathcal{N}(0,\ \Se - \Se \mathbf{R}^\top S^\dagger \mathbf{R} \Se)$ independent of $w$, and set
$$
e := \Se \mathbf{R}^\top S^\dagger (\mathbf{R}w) + \zeta, \qquad \bar{w} := w - e.
$$
Then $e \sim \mathcal{N}(0,\Se)$, $\mathbf{R}e = \mathbf{R}w$ a.s. (hence $\bar{w} \in \mathcal{B}_L$ a.s.), and $\bar{w} + e = w$.
\end{proof}

Finally, note that, using the terminology in~\cite{willems2012open}, Definition~\ref{def:GB} describes an \textit{open} system, as no restrictions on the stochasticity of $\overline{w}$ are imposed other than being supported on $\mathcal{B}_L$.
Although this pragmatic approach sacrifices some of the broader generality found in~\cite{baggio2017lti,willems2012open}, sidestepping these technicalities allows us to formulate simple, tractable, and actionable methods for data-driven identification and control in subsequent sections.

\section{System identification and prediction} \label{sec:GB_sysID}
Having formally defined Gaussian behaviors, we now focus on how to identify them from finite data and leverage them for prediction.
Recall that a Gaussian behavior imposes no law on the nominal component $\overline{w}$.
However, to gather data from the system, we must fix the law of $\overline{w}$, which amounts to designing an experiment.
We take it to be a stationary zero-mean Gaussian (similarly to~\cite{willems2012open}) as summarized in the assumption below.

\begin{ass} \label{ass:endogenous_Gaussian}
    The available data consists of $D > qL$ length-$L$ trajectories 
    $w^\mathrm{d}_i = \overline{w}^\mathrm{d}_i + e_i$, 
    $i \in \{1, \dots, D\}$, of the Gaussian behavior $(\mathcal{B}_L, \Se)$, 
    where $\overline{w}^\mathrm{d}_i$ are identically 
    distributed zero-mean Gaussians with covariance $\overline{\Sigma}$, 
    independent of the exogenous noise $e_i \sim \mathcal{N}(0, \Se)$.
\end{ass}

In a typical control setup where the system is described by stochastic state-space equations, the experiment can be designed such that the data satisfy Assumption~\ref{ass:endogenous_Gaussian}.
Specifically, by starting from rest to ensure the initial state has zero mean and then applying stationary zero-mean Gaussian inputs, the linearity of the system guarantees that $\overline{w}^\mathrm{d}$ follows a stationary zero-mean Gaussian distribution.
Under Assumption~\ref{ass:endogenous_Gaussian}, the entire trajectory follows a zero-mean Gaussian, that is
\begin{align*}
    w^\mathrm{d}_i \sim \mathcal{N}(0,\Sfull), \quad \forall i \in\{1,\dots,D\},
\end{align*}
where $\Sfull = \overline{\Sigma} + \Se$.
The Gaussian behavior $(\mathcal{B}_L,\Se)$ is not identifiable from $\Sfull$ without further structural assumptions, as all decompositions inducing the same $\Sfull$ are observationally equivalent (see~\cite{willems2012open} for a thorough discussion on this non-uniqueness).
Specifically, any given covariance matrix can be split into a low-rank positive semi-definite matrix (endogenous randomness) and a residual positive semi-definite matrix (exogenous randomness) in infinitely many ways.
Since our objective is predictive control, we shift our focus from identifying the complete Gaussian behavior to deriving and estimating a predictive distribution.
This predictive distribution is uniquely identifiable from $\Sfull$, and gives rise to a possible Gaussian behavior consistent with the data.

\subsection{Estimation of the predictive distribution} \label{sec:pred}
For prediction tasks, part of the trajectory is known and the remainder is to be predicted.
Accordingly, we partition a trajectory into free variables $\wfree \in \mathbb{R}^{d_\mathrm{f}}$ and dependent variables $\wdep \in \mathbb{R}^{qL-d_\mathrm{f}}$.
In a conventional state-space setting, $\wfree$ corresponds to the initial conditions and the future input sequence, while $\wdep$ captures the future outputs.
The prediction is then described by the conditional probability of $\wdep$ given $\wfree$.
Since the underlying trajectories follow a joint Gaussian distribution, this conditional distribution is also Gaussian and takes the form
\begin{equation} \label{eq:pred_distr}
    \wdep~|~\wfree \sim \mathcal{N}(M\wfree,\Sp).
\end{equation}
Here, the mean is a linear function of $\wfree$ defined by the matrix $M$, and $\Sp$ is the residual covariance matrix.
The partitioning divides the randomness into an endogenous part, attributed to $\wfree$, and an exogenous part, captured by $\Sp$. 
Conditioning treats $\wfree$ as a given argument, so~\eqref{eq:pred_distr} describes the law of $\wdep$ as a function of $\wfree$, without specifying the law of $\wfree$.

Let us partition the covariance $\Sfull$ into terms corresponding to the free and dependent variables using a permutation matrix $\Pi$ as 
    \begin{align*}
        \Pi \Sfull \Pi^\top = \begin{bmatrix} \Sfull_\mathrm{ff} & (\Sfull_\mathrm{df})^\top \\ \Sfull_\mathrm{df} & \Sfull_\mathrm{dd} \end{bmatrix}.
    \end{align*}
\begin{ass} \label{ass:free_rank}
    Assume that $\Sfull_\mathrm{ff}$ has full rank.
\end{ass}
Assumption~\ref{ass:free_rank} requires the free variables to be sufficiently excited.
Suppose $\wfree$ forms a minimal parameterization of $\mathcal{B}_L$, that is, $d_\mathrm{f} = d$ and every $\overline{w} \in \mathcal{B}_L$ is uniquely determined by its free variables.
Then, $\mathrm{im}(\overline{\Sigma}) = \mathcal{B}_L$ implies that the nominal free block $\overline{\Sigma}_\mathrm{ff}$ is invertible, and hence $\Sfull_\mathrm{ff} \succeq \overline{\Sigma}_\mathrm{ff} \succ 0$.
Conversely, if $d_\mathrm{f} > d$, the condition $\Sfull_\mathrm{ff} \succ 0$ might still hold due to the exogenous noise in the data.
Under Assumption~\ref{ass:free_rank}, the distribution~\eqref{eq:pred_distr} can be uniquely identified from $\Sfull$ using standard formulas~\cite[Thm. 1.2.11]{muirhead1982aspects}
    \begin{align} \label{eq:conditioning}
    \begin{split}
        M & = \Sfull_\mathrm{df} (\Sfull_\mathrm{ff})^{-1}, \\
        \Sp & = \Sfull_\mathrm{dd} - \Sfull_\mathrm{df} (\Sfull_\mathrm{ff})^{-1} (\Sfull_\mathrm{df})^\top.
    \end{split}
    \end{align}
In practice, the joint covariance $\Sfull$ is unknown.
We therefore estimate it by the sample covariance of the data, which yields the following estimate of the predictive distribution.

\begin{prop} \label{prop:ID_pred}
Let $W^\mathrm{free}$ and $W^\mathrm{dep}$ denote the rows of the data matrix 
$\Wd$ from Section~\ref{sec:prelim-dd-control} corresponding to $\wfree$ and $\wdep$, respectively.
Let Assumption~\ref{ass:endogenous_Gaussian} hold and 
suppose that $\mathrm{rank}(W^\mathrm{free}) = d_\mathrm{f}$.
Then, setting $\Sfullh = \tfrac{1}{D} WW^\top$ and substituting it into~\eqref{eq:conditioning} 
yields
\begin{subequations} \label{eq:pred_both}
\begin{align} 
    \hat{M} & = W^\mathrm{dep}(W^\mathrm{free})^\dagger, \label{eq:pred_mean}\\ 
    \Sph & = \frac{1}{D} W^\mathrm{dep}\left(I-(W^\mathrm{free})^\dagger W^\mathrm{free}\right)(W^\mathrm{dep})^\top. \label{eq:pred_var}
\end{align}
\end{subequations}
\end{prop}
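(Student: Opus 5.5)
The computation is direct: substitute the blocks of $\Sfullh$ into~\eqref{eq:conditioning} and simplify with pseudoinverse identities. Write $F := W^\mathrm{free}$ and $G := W^\mathrm{dep}$, so that after the permutation $\Pi$ we have $\Pi W = \begin{bmatrix} F^\top & G^\top\end{bmatrix}^\top$. The blocks of $\Pi\Sfullh\Pi^\top$ are then
\[
\hat{\Sigma}^\mathrm{w}_\mathrm{ff} = \tfrac{1}{D}FF^\top,\qquad \hat{\Sigma}^\mathrm{w}_\mathrm{df} = \tfrac{1}{D}GF^\top,\qquad \hat{\Sigma}^\mathrm{w}_\mathrm{dd} = \tfrac{1}{D}GG^\top.
\]
The factors $\tfrac{1}{D}$ cancel in the mean, and one factor survives in the residual covariance.

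First, I check that~\eqref{eq:conditioning} can be applied to the estimate, i.e., that the sample analogue of Assumption~\ref{ass:free_rank} holds. Since $\mathrm{rank}(F) = d_\mathrm{f}$ and $F \in \mathbb{R}^{d_\mathrm{f}\times D}$, $F$ has full row rank. Hence $FF^\top$ is invertible, and the pseudoinverse has the explicit form $F^\dagger = F^\top(FF^\top)^{-1}$. This explicit form is the key identity.

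Next, for the mean, I compute
\[
\hat M = \hat{\Sigma}^\mathrm{w}_\mathrm{df}\,(\hat{\Sigma}^\mathrm{w}_\mathrm{ff})^{-1} = GF^\top(FF^\top)^{-1} = GF^\dagger,
\]
which is~\eqref{eq:pred_mean}. For the covariance, I compute
\[
\Sph = \tfrac{1}{D}\Bigl(GG^\top - GF^\top(FF^\top)^{-1}FG^\top\Bigr) = \tfrac{1}{D}\,G\bigl(I - F^\dagger F\bigr)G^\top,
\]
using $F^\top(FF^\top)^{-1}F = F^\dagger F$. This is~\eqref{eq:pred_var}.

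Assumption~\ref{ass:endogenous_Gaussian} only motivates $\Sfullh$ as an estimate of $\Sfull$, and it is not used in the algebra. The one point that needs care is the full-row-rank step, since it is what lets us replace the inverse in~\eqref{eq:conditioning} with a pseudoinverse. Everything else is routine block manipulation.
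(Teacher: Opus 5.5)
Your proposal is correct and is the direct substitution the paper intends; the paper gives no separate proof, since the result follows from plugging the blocks of $\tfrac{1}{D}WW^\top$ into~\eqref{eq:conditioning}. The full-row-rank identity $(W^\mathrm{free})^\dagger = (W^\mathrm{free})^\top\left(W^\mathrm{free}(W^\mathrm{free})^\top\right)^{-1}$ is the only step needing justification, and you justify it correctly from $\mathrm{rank}(W^\mathrm{free}) = d_\mathrm{f}$.
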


The rank condition on $W^\mathrm{free}$ is the sample counterpart of Assumption~\ref{ass:free_rank}, and it is reminiscent of the persistency of excitation condition for deterministic behaviors~\cite{markovsky2022data}.
Under the additional assumption that $w^d_i$ are independent, $\hat{M}$ and $\Sph$ in Proposition~\ref{prop:ID_pred} are the maximum likelihood estimates of $M$ and $\Sp$ in\eqref{eq:pred_distr}~\cite[Result 7.10]{johnson2007applied}.
Furthermore, the estimate $\hat{M}$ is the optimal linear predictor of $\wdep$, in the sense that the prediction error covariance $\Sph$ is minimal among the linear predictors~\cite[Thm. 3.2.2]{kailath2000linear}.

While the predictive distribution~\eqref{eq:pred_distr} is unique, it corresponds to one possible Gaussian behavior consistent with the data, which we now construct.
Let the trajectory be partitioned as $w = \Pi\begin{bmatrix}
    \wfree^\top & \wdep^\top
\end{bmatrix}^\top$.
Then, a possible nominal trajectory is given as $\overline{w} = \Pi \begin{bmatrix}
    \wfree^\top & (M\wfree)^\top
\end{bmatrix}^\top$, with noise $e = \Pi \begin{bmatrix}
    0 & e_\mathrm{pred}^\top
\end{bmatrix}^\top$.
The estimates of Proposition~\ref{prop:ID_pred} thus induce the Gaussian behavior $(\hat{\mathcal{B}}_L, \Seh)$, with
\begin{align*}
    \hat{\mathcal{B}}_L = \mathrm{im} \left( \Pi^\top \begin{bmatrix} I \\ \hat{M} \end{bmatrix} \right), \quad \text{and} \quad \Pi \Seh \Pi^\top  = \begin{bmatrix} 0 & 0 \\ 0 & \Sph \end{bmatrix}.
\end{align*}
If $\wfree \sim \mathcal{N}(0,(1/D)\cdot W^\mathrm{free}(W^\mathrm{free})^\top)$, this Gaussian behavior reproduces the sample covariance $\Sfullh$.
Note that $\hat{\mathcal{B}}_L$ has dimension $d_\mathrm{f}$, which is dictated by the 
chosen partitioning into $\wfree$ and $\wdep$.
The following remarks are in order.

\begin{rmk}
    We note that in case no exogenous noise is present ($\Se=0$), the sample covariance is a direct data-driven representation of the behavior, since $\mathrm{im}(\Sfullh) = \mathrm{im}(\overline{\Sigma}) \subseteq \mathcal{B}_L$.
    As a result, existing data-driven behavioral or subspace methods that rely on the image of $W$ can be reinterpreted within our framework by using the image of $\Sfullh$.
    In fact, with $\wfree = \begin{bmatrix}
        \wi^\top & \uf ^\top 
    \end{bmatrix}^\top$ and $\wdep = \yf$, the mean predictor in~\eqref{eq:pred_mean} equals the subspace predictor in~\eqref{eq:SPC}.
\end{rmk}

\begin{rmk}
    In the setting considered here, identifying the matrix $M$ in~\eqref{eq:pred_distr} becomes an \gls{EIV} problem, since the regressor $\wfree$ is itself corrupted by noise. 
    A broad literature exists on methods for \gls{EIV} system identification~\cite{soderstrom2007errors,soderstrom2018errors}, but it is well known that the system is generally not identifiable without introducing additional structural or statistical assumptions on the noise and/or latent variables.
    Moreover, the objective of \gls{EIV} methods is typically to recover the underlying “true” system model, which does not necessarily coincide with the model that yields the best predictive performance. 
    Since our primary goal is predictive control, we instead focus on identifying the predictive distribution~\eqref{eq:pred_distr} that provides the optimal predictor from the subspace system identification literature~\cite{katayama2005subspace}.
\end{rmk}

\subsection{Confidence bound on the prediction} \label{sec:sys_ID_uncertainty}
The covariance $\Sph$ identified in Proposition~\ref{prop:ID_pred} describes the \emph{aleatoric} uncertainty of the prediction, that is, the irreducible randomness caused by the exogenous noise, which persists even if the model were known exactly.
Additionally, the estimates $\hat{M}$ and $\Sph$ are themselves uncertain, since they are identified from a finite dataset.
This constitutes \emph{epistemic} uncertainty, which stems from limited data and vanishes as more data is collected.
Accounting for both sources, the predictions are guaranteed to be confined within a confidence ellipsoid~\cite[Sec. 7.7]{johnson2007applied}.
We summarize this result in the following lemma.

\begin{lemma} \label{lemma:confidence_ellipsoid}
    Let Assumption~\ref{ass:endogenous_Gaussian} hold. 
    Suppose that $w^\mathrm{d}_i$ are mutually independent, and $\Sph$ is invertible.
    Then, given $\wfree$ and a confidence level $0<\epsilon<0.5$, the true value $\wdep = M\wfree + e_\mathrm{pred}$ satisfies with probability $1-\epsilon$
    \small
    \begin{align} \label{eq:confidence_ellipsoid}
        \|\wdep - \hat{M}\wfree\|^2_{\Sph^{-1}} \leq \left(1 + \|\wfree\|^2_{(W^\mathrm{free} (W^\mathrm{free})^\top)^{-1}} \right) \cdot r(\epsilon),
    \end{align}
    \normalsize
    where $r(\epsilon):= \frac{D\cdot (qL-d)}{D-qL +1}F_{d,D-qL +1}(1-\epsilon)$, and $F_{d,D-qL +1}(1-\epsilon)$ is the inverse F-distribution at probability $1-\epsilon$ with degrees of freedom $d$ and $D-qL+1$.
\end{lemma}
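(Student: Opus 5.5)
This is the classical multivariate-regression prediction ellipsoid \cite[Sec. 7.7]{johnson2007applied}, and I would prove it by reducing to that setting. Conditionally on the regressors, the data obey a Gaussian linear model. By Assumption~\ref{ass:endogenous_Gaussian} and mutual independence, the columns $w^\mathrm{d}_i$ are i.i.d.\ $\mathcal{N}(0,\Sfull)$. Hence, conditional on $W^\mathrm{free}$, the dependent block satisfies
\begin{align*}
W^\mathrm{dep} = M W^\mathrm{free} + E,
\end{align*}
where $E$ has i.i.d.\ columns distributed as $\mathcal{N}(0,\Sp)$, with $M,\Sp$ given by \eqref{eq:conditioning}. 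The new test point is independent of the data, so $\wdep = M\wfree + e_\mathrm{pred}$ with $e_\mathrm{pred}\sim\mathcal{N}(0,\Sp)$ independent of $E$. All subsequent statements are made conditionally on $W^\mathrm{free}$ and $\wfree$. Since the final probability bound does not depend on them, it also holds unconditionally.

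\textbf{Step 1: distribution of the prediction residual.} Using \eqref{eq:pred_mean} and the full row rank of $W^\mathrm{free}$, we have $\hat M - M = E (W^\mathrm{free})^\dagger$. Therefore
\begin{align*}
\wdep - \hat M\wfree = e_\mathrm{pred} - E (W^\mathrm{free})^\dagger \wfree .
\end{align*}
This residual is Gaussian with zero mean and covariance $(1+c)\Sp$, where
\begin{align*}
c = \|(W^\mathrm{free})^\dagger\wfree\|_2^2 = \|\wfree\|^2_{(W^\mathrm{free}(W^\mathrm{free})^\top)^{-1}} .
\end{align*}
The second equality uses $(W^\mathrm{free})^\dagger = (W^\mathrm{free})^\top (W^\mathrm{free}(W^\mathrm{free})^\top)^{-1}$.

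\textbf{Step 2: independence and Wishart law of $\Sph$.} From \eqref{eq:pred_var}, $D\Sph = E P E^\top$, where $P = I-(W^\mathrm{free})^\dagger W^\mathrm{free}$ is an orthogonal projector of rank $D-d_\mathrm{f}$. Since $EP$ and $E(W^\mathrm{free})^\dagger$ involve orthogonal row spaces, Gaussianity gives independence of $D\Sph$ and the residual. The standard Cochran-type argument then gives
\begin{align*}
D\Sph \sim \mathcal{W}_{qL-d_\mathrm{f}}(\Sp, D-d_\mathrm{f}).
\end{align*}

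\textbf{Step 3: Hotelling $T^2$ and the F-quantile.} Define $z = (1+c)^{-1/2}(\wdep-\hat M\wfree) \sim \mathcal{N}(0,\Sp)$. Writing $S = D\Sph/(D-d_\mathrm{f})$, the quantity $z^\top S^{-1} z$ is a Hotelling $T^2$ statistic with dimension $k = qL-d_\mathrm{f}$ and $\nu = D-d_\mathrm{f}$ degrees of freedom. It therefore satisfies
\begin{align*}
\frac{\nu-k+1}{\nu k}\, T^2 \sim F_{k,\nu-k+1}, \qquad \nu-k+1 = D-qL+1 .
\end{align*}
Converting from $S$ back to $\Sph$ rescales by $D/(D-d_\mathrm{f})$, and taking the $(1-\epsilon)$ quantile yields \eqref{eq:confidence_ellipsoid}. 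The resulting constant has the form $\frac{D k}{D-qL+1}F_{k,D-qL+1}(1-\epsilon)$, which matches $r(\epsilon)$ under the identification $d_\mathrm{f}=d$ and $k = qL-d$. Invertibility of $\Sph$ ensures that the quadratic form is well defined.

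\textbf{Main obstacle.} The delicate part is the bookkeeping of degrees of freedom in Step 3, namely reconciling the Wishart parameter $D-d_\mathrm{f}$ with the exact normalization and F-parameters in $r(\epsilon)$. I would carry $d_\mathrm{f}$ symbolically, as in the Johnson–Wichern result, and then specialize to $d_\mathrm{f}=d$. Step 2's independence of $\Sph$ from the residual also deserves care, but it follows from $P(W^\mathrm{free})^\dagger = 0$.
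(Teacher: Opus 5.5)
Your proof is correct and follows the paper's route. The paper gives no separate argument; it invokes the multivariate-regression prediction ellipsoid of \cite[Sec.~7.7]{johnson2007applied}, and your Steps 1--3 are exactly the derivation behind that result: the conditional Gaussian linear model, residual covariance $(1+c)\Sp$, independence via $P(W^\mathrm{free})^\dagger=0$, the Wishart law of $D\Sph$ with $D-d_\mathrm{f}$ degrees of freedom, and the Hotelling $T^2$ step. Reading the lemma's $d$ as $d_\mathrm{f}$ is also consistent with the paper's later choice $d=mL+p\Ti$.

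One correction to your final step. The constant you obtain does \emph{not} literally match $r(\epsilon)$ as printed: you get the quantile $F_{qL-d,\,D-qL+1}(1-\epsilon)$, whereas the lemma writes $F_{d,\,D-qL+1}(1-\epsilon)$. Your first degree of freedom is the correct one. It equals the response dimension, agrees with the prefactor $qL-d$, and matches the $F_{m,n-r-m}$ (with $m$ the number of responses) of the cited result. So the printed index appears to be a typo. You should say this explicitly rather than assert agreement, because with $F_{d,\cdot}$ the coverage is in general not $1-\epsilon$ unless $d=qL-d$.

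A small point: the nonsingular Hotelling step needs $\Sp\succ 0$. This follows from the hypothesis that $\Sph$ is invertible, since $D\Sph=EPE^\top$ and the columns of $E$ lie in $\mathrm{im}(\Sp)$ almost surely.
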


The bound in Lemma~\ref{lemma:confidence_ellipsoid} quantifies the prediction uncertainty as a function of the available data.
As the number of data samples $D$ tends to infinity, the matrix $W^\mathrm{free}(W^\mathrm{free})^\top$ grows unbounded, causing its inverse to approach zero.
As a result, the term $\|\wfree\|_{(W^\mathrm{free}(W^\mathrm{free})^\top)^{-1}}$ vanishes, and the right-hand side of~\eqref{eq:confidence_ellipsoid} converges to $r(\epsilon)$. 
This asymptotic behavior corresponds to the case where the parameters $\hat{M}$ and $\Sph$ are known exactly, meaning the epistemic uncertainty disappears and the prediction uncertainty is determined solely by the aleatoric part.
For finite data sets, the epistemic term $\|\wfree\|_{(W^\mathrm{free}(W^\mathrm{free})^\top)^{-1}}$ is nonzero and makes the size of the confidence ellipsoid direction-dependent.
Directions in the free-variable space where more data has been collected correspond to larger eigenvalues of $W^\mathrm{free}(W^\mathrm{free})^\top$.
Consequently, when $\wfree$ points in these data-rich directions, the value of $\|\wfree\|_{(W^\mathrm{free}(W^\mathrm{free})^\top)^{-1}}$ is small, resulting in a smaller confidence ellipsoid.

We also note that Lemma~\ref{lemma:confidence_ellipsoid} relies on the assumption that the data samples $w_i^d$ are mutually independent.
This condition requires each column of the data matrix $W$ to be collected from a separate, independent experiment. 
In practice, constructing a Hankel matrix from a single continuous data stream is a more data-efficient approach.
Relaxing the independence assumption to accommodate such structured matrices or informed experiment design are left as future work.

\subsection{Incorporating uncertain forecasts in the prediction} \label{sec:forecast}
Predictive control benefits greatly from forecasts of future quantities, making the treatment of forecast uncertainty an important consideration.
For instance, forecasts of weather, renewable generation, or consumer demand are central to the economic operation of energy systems and buildings.
Our framework uses a multi-step model, which is better suited than single-step alternatives for incorporating such forecasts over a horizon.
As an illustrative example, consider a stochastic state-space representation~\eqref{eq:stochastic-SS} that is augmented with an external variable $v_t$  affecting the state dynamics, such as $$x_{t+1} = Ax_t + Bu_t + v_t + \xi_t.$$
Assume that at each time $t'$, the past values $v_t$ for $t \le t'$ are observed, while future values are available only through forecasts.
We write $\hat{v}_{t|t'}$ for the forecast of $v_t$ issued at time $t'$.
The forecast for the entire future horizon predicted at time $t'$ is $\hat{v}_{\mathrm{f}|t'} = \begin{bmatrix}
    \hat{v}_{t'+1|t'}^\top & \dots & \hat{v}_{t'+\Tf|t'}^\top
\end{bmatrix}^\top$, and it is a sum of the true future values $v_\mathrm{f}$ and a forecast noise component $e_v$, such that $\hat{v}_{\mathrm{f}|t'} = v_\mathrm{f} + e_v$.
Our goal is to synthesize a predictive distribution of $\yf$ given $\ui,\yi,\uf$, the most recent measurements of $v$ denoted as $v_\mathrm{ini}$, and the forecast $\hat{v}_{\mathrm{f}|t'}$.

In order to obtain such a predictive distribution, we define the ``proxy" trajectory for some time $t$ as
$$\begin{bmatrix}
    u_{[t-\Ti,t]} \\  v_{[t-\Ti,t]} \\ y_{[t-\Ti,t]} \\ u_{[t+1,t+\Tf]} \\ \hat{v}_{\mathrm{f}|t} \\ y_{[t+1,t+\Tf]}
\end{bmatrix} = \begin{bmatrix}
    \ui \\ v_\mathrm{ini} \\ \yi \\ \uf \\ v_\mathrm{f} \\ \yf 
\end{bmatrix} + \begin{bmatrix}
    0 \\ 0 \\ 0\\ 0\\ e_v \\ 0
\end{bmatrix}.$$
The noise term originating from the forecast uncertainty $e_v$ can be simply lumped to the exogenous noise $e$ in Definition~\ref{def:GB}.
Thus, we can use the data-driven prediction method described in Section~\ref{sec:pred}.
The dependent variable corresponds to $\yf$, and we include all other components in $\wfree$.
The corresponding partitioned data matrix is $W^\mathrm{free} = \begin{bmatrix}
    U_\mathrm{p}^\top & V_\mathrm{p}^\top & Y_\mathrm{p}^\top & U_\mathrm{f}^\top & F^\top
\end{bmatrix}^\top,$
where $V_\mathrm{p}$, containing the initial values of $v$, is defined similarly to $U_\mathrm{p}$.
Furthermore, $F$ contains $D$ separate forecasts corresponding to the offline data arranged as
$$
F = \begin{bmatrix}
    \hat{v}_{\mathrm{f}|1} & \hat{v}_{\mathrm{f}|2} & \dots & \hat{v}_{\mathrm{f}|D}
\end{bmatrix}.
$$
With the introduced proxy trajectory, the desired predictive distribution is constructed as in~\eqref{eq:pred_distr}.
The components $M$ and $\Sp$ can be estimated with the augmented data matrix $W^\mathrm{free}$ as in Proposition~\ref{prop:ID_pred}.

Importantly, the estimated covariance $\Sph$ now captures the forecast uncertainty as well. 
We impose no assumptions on the temporal correlations of the forecast error $e_v$, which may be arbitrarily correlated across the horizon, and it need not be stationary.
For instance, the forecast uncertainty may grow over the prediction horizon, as is typical of real forecasts.
Consequently, the covariance $\Sp$ does not possess any particular structure.
By including past forecasts in the matrix $W$, the proposed method identifies this unstructured uncertainty directly from data.
This flexibility stems from the trajectory-based approach, which treats the horizon jointly and requires no temporal recursion on the errors.
In fact, the proxy trajectory does not come from a window of an infinite sequence, because the forecast $\hat{v}_{\mathrm{f}|t}$ may change entirely between two consecutive times $t$ and $t+1$. 
Nevertheless, our framework can still handle this setting, since the estimation in Proposition~\ref{prop:ID_pred} relies only on the joint distribution of the trajectory samples.
Single-step models, by contrast, propagate errors recursively, which imposes structural assumptions on the disturbances and makes such time-varying forecast uncertainty more challenging to model.
We illustrate the treatment of noisy forecasts on the \gls{HVAC} case study in Section~\ref{sec:HVAC}.

\section{Control under aleatoric uncertainty} \label{sec:CE_control}
We now apply our framework to the design of data-driven predictive control. 
To this end, we utilize the predictive distribution given in~\eqref{eq:pred_distr} with the assignments $\wfree = \begin{bmatrix} \wi^\top & \uf^\top \end{bmatrix}^\top$ and $\wdep = \yf$.
By partitioning the data matrix $W$ as detailed in Section~\ref{sec:prelim-dd-control}, the parameters of the predictive distribution, $\hat{M}$ and $\Sph$, are estimated from data according to~\eqref{eq:pred_both} in Proposition~\ref{prop:ID_pred}, yielding
\begin{align} \label{eq:pred_distr_control_estim}
\hat{M} = Y_\mathrm{f} \begin{bmatrix} W_\mathrm{p} \\ U_\mathrm{f} \end{bmatrix}^\dagger,\; \Sph = \frac{1}{D} Y_\mathrm{f}\left(I-\begin{bmatrix} W_\mathrm{p} \\ U_\mathrm{f} \end{bmatrix}^\dagger\begin{bmatrix} W_\mathrm{p} \\ U_\mathrm{f} \end{bmatrix}\right)Y_\mathrm{f}^\top.
\end{align}
This matrix formulation is widely recognized in the data-driven control literature as the \textit{subspace predictor} in~
\eqref{eq:SPC}~\cite{favoreel1999spc,markovsky2022data}.
Note also that this predictive distribution inherently corresponds to the optimal prediction step utilized in subspace system identification, see~\cite{katayama2005subspace}.
However, rather than proceeding with the traditional realization step to explicitly identify state-space system matrices ($A, B, C, D$), we bypass state estimation entirely and use this data-driven predictive distribution directly.

In this section, we only account for the aleatoric uncertainty, i.e., we assume that $\hat{M}$ and $\Sph$ are perfect estimates of $M$ and $\Sp$.
We express the predictive distribution~\eqref{eq:pred_distr} as an affine relationship with additive noise
\begin{align} \label{eq:pred_distr_control}
\yf = \hat{M}_\mathrm{ini} \wi + \hat{M}_u \uf + e_\mathrm{pred},
\end{align}
where $\hat{M}_\mathrm{ini}$ and $\hat{M}_u$ denote the block columns of $\hat{M}$ corresponding to $\wi$ and $\uf$, respectively.
The term $e_\mathrm{pred} \sim \mathcal{N}(0, \Sph)$ represents the aleatoric uncertainty in the form of stochastic prediction error.
Building on the relation~\eqref{eq:pred_distr_control}, we design predictive controllers that minimize the expected cost subject to (probabilistic) constraints on the future inputs and outputs. 
We present two formulations: first with a feedforward input, which recovers \gls{SPC}, and then with a disturbance affine feedback policy.

\subsection{Chance-constrained stochastic predictive control} \label{sec:CE_feedforward}
We consider polytopic constraint sets in the form
\begin{align*}
    \mathcal{Y} &:= \left\{y\in\mathbb{R}^{p\Tf}~|~a_i^\top y\leq b_i,~\forall i= 1,\dots,n_y \right\}, \\ 
    \mathcal{U} &:= \left\{u\in\mathbb{R}^{m\Tf}~|~(a^u_i)^\top u \leq b^u_i,~\forall i= 1,\dots,n_u \right\}, 
\end{align*}
defined by $n_y$ and $n_u$ scalar inequalities, respectively.
Since the future outputs (and potentially the inputs) are random variables, as standard in stochastic \gls{MPC}~\cite{farina2016stochastic,mesbah2016stochastic}, we enforce individual chance constraints in the form $\mathrm{Pr}(a_i^\top y \leq b_i) \geq 1-\epsilon$, for each constraint $i \in \{1, \dots, n_y\}$, where $\epsilon \in (0, 0.5]$ is a predefined acceptable violation probability.
For Gaussian random variables, individual chance constraints admit an equivalent deterministic representation using the mean and variance~\cite{blackmore2011chance}.
\begin{lemma} \label{lemma:chance_constraint}
For $x\sim\mathcal{N}(\mu,\sigma^2)$, it holds that
    \begin{align} \label{eq:lemma_chance_constr1}
        \mathrm{Pr}(x\leq \beta)\geq 1-\epsilon,
        \end{align}
        with some $\epsilon\in (0,0.5]$, if and only if 
        \begin{align}  \label{eq:chance_constraint_det}
        \mu + c(\epsilon) \sigma \leq \beta,
    \end{align}
    where $$c(\epsilon):= \sqrt{2}\mathrm{erf}^{-1}(1-2\epsilon),$$ and $\mathrm{erf}$ is the Gauss error function, satisfying $\mathrm{erf}^{-1}(0) = 0$.
\end{lemma}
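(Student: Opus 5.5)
The plan is to standardize and reduce the statement to a monotonicity property of the standard normal cumulative distribution function $\Phi(z) = \tfrac12\bigl(1+\mathrm{erf}(z/\sqrt{2})\bigr)$.

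First treat the nondegenerate case $\sigma > 0$ (taking $\sigma \geq 0$ as the standard deviation). Writing $z = (x-\mu)/\sigma \sim \mathcal{N}(0,1)$, we have
\begin{align*}
\mathrm{Pr}(x \leq \beta) = \mathrm{Pr}\bigl(z \leq (\beta-\mu)/\sigma\bigr) = \Phi\bigl((\beta-\mu)/\sigma\bigr).
\end{align*}
Hence \eqref{eq:lemma_chance_constr1} is equivalent to $\Phi((\beta-\mu)/\sigma) \geq 1-\epsilon$. Because $\Phi$ is continuous and strictly increasing on $\mathbb{R}$, this inequality holds if and only if $(\beta-\mu)/\sigma \geq \Phi^{-1}(1-\epsilon)$.

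Next identify the quantile. Solving $\tfrac12(1+\mathrm{erf}(c/\sqrt2)) = 1-\epsilon$ gives $c = \sqrt2\,\mathrm{erf}^{-1}(1-2\epsilon) = c(\epsilon)$. This is well defined because $1-2\epsilon \in [0,1)$ for $\epsilon \in (0,0.5]$, and it yields $c(\epsilon) \geq 0$. Multiplying by $\sigma>0$ then gives exactly \eqref{eq:chance_constraint_det}.

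The only subtle step is the degenerate case $\sigma = 0$. Here $x = \mu$ almost surely, so $\mathrm{Pr}(x\leq\beta)$ equals $1$ if $\mu \leq \beta$ and $0$ otherwise. Since $1-\epsilon > 0$, the chance constraint holds exactly when $\mu \leq \beta$, which is \eqref{eq:chance_constraint_det} with $\sigma = 0$. This case is where the restriction $\epsilon>0$ is used, while $\epsilon \leq 0.5$ guarantees that $c(\epsilon) \geq 0$.
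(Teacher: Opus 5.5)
The paper does not prove this lemma; it states it as the standard Gaussian chance-constraint reformulation and cites \cite{blackmore2011chance}. Your proof is correct and is the standard argument behind that reference: you standardize, use that $\Phi$ is strictly increasing, and identify $\Phi^{-1}(1-\epsilon)=\sqrt{2}\,\mathrm{erf}^{-1}(1-2\epsilon)$. You also handle the degenerate case $\sigma=0$, which the paper passes over.

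One small slip in your closing remark concerns where each restriction on $\epsilon$ enters. The inequality $1-\epsilon>0$ you use in the case $\sigma=0$ comes from $\epsilon<1$, not from $\epsilon>0$. What $\epsilon>0$ actually does is keep $c(\epsilon)$ finite. The bound $\epsilon\le 0.5$ is not needed for the equivalence itself; it only gives $c(\epsilon)\ge 0$, which matters for the convexity of the later formulations.
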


\begin{rmk}
Alternative to chance constraints, one can employ the \gls{CVaR}~\cite{rockafellar2000optimization}, which accounts for the expected magnitude of constraint violations.
For a scalar Gaussian variable $x\in\mathbb{R},~x\sim\mathcal{N}(\mu,\sigma^2)$, the \gls{CVaR} takes the form~\cite{rockafellar2000optimization}
\begin{equation*}
\mathrm{CVaR}_\epsilon(x) = \mu + c_{\mathrm{CVaR}}(\epsilon)\sigma,
\end{equation*}
where
$
c_{\mathrm{CVaR}}(\epsilon) := \left(\sqrt{2\pi} \epsilon \exp\left(\mathrm{erf}^{-1}(1- 2\epsilon)\right)^2 \right)^{-1}.
$
Since enforcing $\mathrm{CVaR}_\epsilon(x) \leq 0$ leads to the same structural constraint as~\eqref{eq:chance_constraint_det}, with a different weighting on the standard deviation, we can incorporate \gls{CVaR} constraints into the subsequent control formulations by swapping the weighting.
\end{rmk}

Here, we treat the future input $\uf\in\mathcal{U}$ as a deterministic decision variable, while the output $\yf$ remains stochastic through the prediction error $e_\mathrm{pred}$.
Minimizing the expected control cost given $\wi$ subject to output chance constraints yields the following formulation
\begin{align} \label{eq:stoch_control}
\begin{split}
    \min_{\uf\in\mathcal{U}} & \quad \mathbb{E}[J(\uf,\yf)] \\
    \mathrm{s.t.} & \quad \mathrm{Pr}(a_i^\top \yf \leq b_i) \geq 1-\epsilon,\quad \forall i=1,\dots,n_y.
\end{split}
\end{align}
By expanding the expectation and applying Lemma~\ref{lemma:chance_constraint}, the stochastic problem~\eqref{eq:stoch_control} can be reformulated as the deterministic quadratic programming problem
\begin{align*}
    \min_{\uf\in\mathcal{U}} & \quad J\left(\uf,\hat{M}_\mathrm{ini} \wi + \hat{M}_u \uf \right) + \mathrm{tr}(Q\Sph) \\
\mathrm{s.t.} & \quad a_i^\top (\hat{M}_\mathrm{ini} \wi + \hat{M}_u \uf)
 + c(\epsilon) \sqrt{a_i^\top \Sph a_i} \leq b_i.
\end{align*}
We recover \gls{SPC}~\eqref{eq:SPC} by setting $\epsilon = 0.5$.
The mean output $\mathbb{E}[\yf]=\hat{M}_\mathrm{ini}\wi+\hat{M}_\mathrm{u}\uf$ equals the subspace predictor by~\eqref{eq:pred_distr_control_estim}.
Since $c(0.5)=0$, the constraints act on this mean, and the trace term $\mathrm{tr}(Q\Sph)$ is a constant offset independent of $\uf$.
Therefore, the two problems yield the same optimal input.

\subsection{Predictive control with affine feedback policies} \label{sec:dist_feedback}

While the formulation~\eqref{eq:stoch_control} provides a useful baseline, restricting the future input sequence $\uf$ to be purely deterministic inherently leads to conservative control actions.
Under open-loop predictions, the uncertainty of the future outputs naturally accumulates over the horizon due to the persistent exogenous noise $e_\mathrm{pred}$. 
Guaranteeing constraint satisfaction under these expanding bounds restricts the feasible region and degrades performance.

However, because $\yf$ is a trajectory of an underlying dynamical system, the elements of the prediction error $e_\mathrm{pred}$ are temporally correlated. 
This correlation is explicitly captured by the estimated covariance matrix $\Sph$.
We can exploit this knowledge to actively mitigate the growth of prediction uncertainty by updating future inputs based on intermediate error realizations.
Since the mean prediction is linear and the noise is Gaussian, it is natural to replace the open-loop sequence $\uf$ with a disturbance affine feedback policy
\begin{align} \label{eq:feeback_policy}
    \uf = K e_\mathrm{pred} + \overline{u}_\mathrm{f}
\end{align}
where $\overline{u}_\mathrm{f}$ denotes the deterministic feedforward component of the input, and $K$ is a causal feedback gain matrix.
Causality ensures that the control action at any future time step depends only on the prediction errors realized at previous time steps, which can be computed by comparing the realized and predicted outputs.
Mathematically, this restricts $K$ to the set of strictly block lower triangular matrices, defined as
\begin{align*}
\mathcal{K} := \left\{ K \in \mathbb{R}^{m\Tf \times p\Tf} ~|~ K_{i,j} = 0,~\forall i \leq j \right\},
\end{align*}
where $K_{i,j}$ denotes the $(i,j)$-th block of dimension $m \times p$.
The feedback term $K e_\mathrm{pred}$ in~\eqref{eq:feeback_policy} incorporates the exogenous noise into the system input, rendering the sequence $\uf$ stochastic. 
Optimizing the feedback gain $K\in\mathcal{K}$ then allows us to actively shape the covariance of the predicted output $\yf$, reducing its variance and mitigating the conservatism of the predictions, at the expense of increasing the variance of the applied control input

We optimize over the feedforward input $\overline{u}_\mathrm{f}$ and the causal gain $K\in\mathcal{K}$, which parameterize the policy~\eqref{eq:feeback_policy}.
Under this policy, $\uf$ and $\yf$ are the random variables defined by~\eqref{eq:feeback_policy} and~\eqref{eq:pred_distr_control}.
The resulting chance constrained predictive control problem is
\begin{subequations} \label{prb:feedback_policy}
\begin{align}
\min_{\overline{u}_\mathrm{f},K\in\mathcal{K}} & \; \mathbb{E}[J(\uf,\yf)] \\
\mathrm{s.t.} &  \; \mathrm{Pr}(a_i^\top\yf \leq b_i) \geq 1- \epsilon_y,\; \forall i = 1,\dots,n_y, \label{eq:chance_constraints_y}\\
    & \; \mathrm{Pr}((a^u_j)^\top \uf \leq b^u_j) \geq 1- \epsilon_u,\; \forall j = 1,\dots,n_u. \label{eq:chance_constraints_u}
\end{align}
\end{subequations}
Since the inputs are now stochastic, we pose individual chance constraints on the inputs as well in~\eqref{eq:chance_constraints_u}.
The input and output constraints must be satisfied with probability $1-\epsilon_u$ and $1-\epsilon_y$, respectively, where $\epsilon_u, \epsilon_y\in(0,0.5]$.
Exploiting that $\uf$ and $\yf$ are Gaussians and the constraints are polytopic, problem~\eqref{prb:feedback_policy} can be reformulated as the following deterministic optimization problem.

\begin{thm}
Problem~\eqref{prb:feedback_policy} is equivalent to the following convex problem
\begin{align} \label{eq:cvx_reform}
\begin{split}
    \min_{\overline{u}_\mathrm{f},K\in\mathcal{K}} \quad & J\left(\overline{u}_\mathrm{f}, \hat{M}\begin{bmatrix}
        \wi \\ \overline{u}_\mathrm{f}
    \end{bmatrix}\right) +  \mathrm{tr}\big(R\cdot K\Sph K^\top\big)\\
        & \qquad + \mathrm{tr}\left(Q\cdot(\hat{M}_\mathrm{u} K + I)\Sph (\hat{M}_\mathrm{u} K +I)^\top\right) \\
        \mathrm{s.t.} \quad 
        & c(\epsilon_y)\cdot\|(\hat{M}_u K+I)^\top a_i\|_{\Sph} \leq  b_i - a_i^\top \hat{M}\begin{bmatrix}
            \wi \\ \overline{u}_\mathrm{f}
        \end{bmatrix},\\
        & c(\epsilon_u)\cdot \|K^\top a^u_j\|_{\Sph} \leq b^u_j - (a^u_j)^\top \overline{u}_\mathrm{f}, \\
        & \forall i = 1,\dots,n_y, \quad \forall j = 1,\dots,n_u.
\end{split}
\end{align}
\end{thm}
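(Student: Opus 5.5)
The plan is to substitute the policy~\eqref{eq:feeback_policy} into the predictive relation~\eqref{eq:pred_distr_control}. This expresses both $\uf$ and $\yf$ as affine functions of the single Gaussian vector $e_\mathrm{pred}\sim\mathcal{N}(0,\Sph)$:
\begin{align*}
\uf = \overline{u}_\mathrm{f} + K e_\mathrm{pred}, \qquad
\yf = \hat{M}\begin{bmatrix}\wi\\ \overline{u}_\mathrm{f}\end{bmatrix} + (\hat{M}_u K + I)e_\mathrm{pred}.
\end{align*}
Hence $\uf \sim \mathcal{N}(\overline{u}_\mathrm{f}, K\Sph K^\top)$ and $\yf\sim\mathcal{N}(\bar y, (\hat{M}_uK+I)\Sph(\hat{M}_uK+I)^\top)$, where $\bar y := \hat{M}[\wi^\top~\overline{u}_\mathrm{f}^\top]^\top$. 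All remaining steps reduce to computing first and second moments of these affine images.

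For the objective, I will use the standard identity $\mathbb{E}\|z-r\|_P^2 = \|\mathbb{E}z - r\|_P^2 + \mathrm{tr}(P\,\mathrm{Cov}(z))$ for each of the two quadratic terms of $J$. This yields exactly $J(\overline{u}_\mathrm{f},\bar y) + \mathrm{tr}(RK\Sph K^\top) + \mathrm{tr}(Q(\hat M_uK+I)\Sph(\hat M_uK+I)^\top)$. For the constraints, each scalar $a_i^\top\yf$ is Gaussian with mean $a_i^\top\bar y$ and variance $a_i^\top(\hat M_uK+I)\Sph(\hat M_uK+I)^\top a_i = \|(\hat M_uK+I)^\top a_i\|_{\Sph}^2$. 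Applying Lemma~\ref{lemma:chance_constraint} with $\epsilon_y\in(0,0.5]$ turns~\eqref{eq:chance_constraints_y} equivalently into the stated second-order cone constraint. The input constraints~\eqref{eq:chance_constraints_u} are handled identically, with variance $\|K^\top a_j^u\|_{\Sph}^2$. The equivalence holds pointwise in $(\overline{u}_\mathrm{f},K)$, so both problems share the feasible set and the objective values, and hence have the same minimizers.

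The remaining step, which I expect to be the only nontrivial one, is convexity. The constraint $K\in\mathcal{K}$ is a linear subspace. Writing $\Sph = LL^\top$ (e.g.\ a symmetric square root), the trace terms become $\|R^{1/2}KL\|_F^2$ and $\|Q^{1/2}(\hat M_uK+I)L\|_F^2$. These are squared Frobenius norms of affine functions of $K$ and are therefore convex. The term $J(\overline{u}_\mathrm{f},\bar y)$ is a convex quadratic in $\overline{u}_\mathrm{f}$ because $\bar y$ is affine in $\overline{u}_\mathrm{f}$ and $R,Q\succ0$. In each constraint, the left-hand side equals $c(\epsilon)\|L^\top(\hat M_uK+I)^\top a_i\|_2$, a Euclidean norm of an affine function of $K$ scaled by $c(\epsilon)\ge 0$. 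This nonnegativity is exactly why $\epsilon\le 0.5$ is required. The right-hand side is affine in $\overline{u}_\mathrm{f}$. Each constraint is therefore a second-order cone constraint and defines a convex set, which concludes the argument.

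A subtle point to check is that no cross terms between $\overline{u}_\mathrm{f}$ and $K$ arise in the objective. They vanish because $e_\mathrm{pred}$ has zero mean and the decision variables are deterministic, so the objective decouples into a mean part and a covariance part.
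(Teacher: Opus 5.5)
Your proposal is correct and follows essentially the same route as the paper. Both derive the Gaussian laws of $\uf$ and $\yf$ under the policy, use $\mathbb{E}[x^\top Qx]=\mu^\top Q\mu+\mathrm{tr}(Q\Sigma)$ for the cost, apply Lemma~\ref{lemma:chance_constraint} to each scalar constraint, and observe second-order cone constraints together with a linear subspace for $K\in\mathcal{K}$; your Frobenius-norm rewriting of the trace terms and the remark that $c(\epsilon)\ge 0$ for $\epsilon\le 0.5$ make the convexity step more explicit than the paper's bare observation that the cost is quadratic in $\overline{u}_\mathrm{f}$ and $K$.
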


\begin{proof}
The affine feedback policy~\eqref{eq:feeback_policy} renders the input stochastic, with distribution
\begin{align}
    \label{eq:uf_distr} 
    \uf \sim \mathcal{N}(\overline{u}_\mathrm{f},K\Sph K^\top).
\end{align}
Further,~\eqref{eq:pred_distr_control} with the feedback policy is
\begin{align*}
    \yf = \hat{M}_\mathrm{ini} \wi + \hat{M}_\mathrm{u} \uf + (\hat{M}_\mathrm{u}K +I)e_\mathrm{pred}.
\end{align*}
Thus, the distribution of $\yf$ becomes
\begin{align}
    \yf \sim \mathcal{N}\Bigg(\hat{M}\begin{bmatrix}
        \wi \\ \overline{u}_\mathrm{f}
    \end{bmatrix}, (\hat{M}_\mathrm{u} K +I)\Sph(\hat{M}_\mathrm{u} K +I)^\top\Bigg). \label{eq:yf_distr}
\end{align}
Note that, for $x\sim \mathcal{N}(\mu,\Sigma)$, the expectation of a quadratic cost decomposes as $\mathbb{E}[x^\top Qx] = \mu^\top Q\mu + \mathrm{tr}(Q\Sigma)$.
The equivalence of the costs in~\eqref{prb:feedback_policy} and \eqref{eq:cvx_reform} can be shown by inserting~\eqref{eq:uf_distr} and~\eqref{eq:yf_distr} into $\mathbb{E}[J(\uf,\yf)]$.
Furthermore, the cost is quadratic in both $\overline{u}_\mathrm{f}$ and $K$.

Moreover with $\uf$ and $\yf$ distributed as in~\eqref{eq:uf_distr} and~\eqref{eq:yf_distr}, the constraints of~\eqref{prb:feedback_policy} and \eqref{eq:cvx_reform} are equivalent by Lemma~\ref{lemma:chance_constraint}.
Note that the constraints of~\eqref{eq:cvx_reform} are second-order cone constraints, and $K\in\mathcal{K}$ is defined by linear equality constraints.
Thus,~\eqref{eq:cvx_reform} is a convex problem.
\end{proof}

The first term in~\eqref{eq:cvx_reform}, involving the feedforward input $\overline{u}_\mathrm{f}$, penalizes the mean cost, while the two trace terms, involving the feedback gain $K$, penalize the variance and take the form of an LQG cost in disturbance feedback form.
In Section~\ref{sec:HVAC}, we demonstrate this formulation and show that optimizing over disturbance affine feedback policies reduces the controller's conservatism.

\section{Control under aleatoric and epistemic uncertainty} \label{sec:uncertain_control}
As we established in Section~\ref{sec:sys_ID_uncertainty}, identifying the parameters $M$ and $\Sp$ from finite data as in Proposition~\ref{prop:ID_pred} leads to epistemic uncertainty.
In this section, we propose predictive control methods that take into account this uncertainty.
Unlike the aleatoric noise, the epistemic uncertainty (and thus, the overall uncertainty in the predictions) depends on the input.
Consequently, optimizing over disturbance feedback policies as in Section~\ref{sec:dist_feedback} would lead to non-convex problems.
We therefore restrict our attention to feedforward inputs, and design controllers using Lemma~\ref{lemma:confidence_ellipsoid}.
Assuming that $\Sph$ is invertible, we define the uncertainty ellipsoid around the predicted $\yf$ as
\begin{align}
\begin{split} \label{eq:def_M}
& \mathcal{M}_\epsilon(\uf,\wi) := \Biggl\{y\in\mathbb{R}^{p\Tf}~\Bigg| \\ 
&  \left\|y-\hat{M}\begin{bmatrix}
    \wi \\ \uf
\end{bmatrix}\right\|^2_{\Sph^{-1}} \leq \left(1 + \left\|\begin{bmatrix}
    \wi \\ \uf 
\end{bmatrix}\right\|^2_{P}\right)\cdot r(\epsilon)
\Biggr\},
\end{split}
\end{align}
where $$P:= \left(\begin{bmatrix}
    W_\mathrm{p} \\U_\mathrm{f}
\end{bmatrix}\begin{bmatrix}
    W_\mathrm{p} \\U_\mathrm{f}
\end{bmatrix}^\top\right)^{-1},$$ $r(\epsilon)$ is defined as in Lemma~\ref{lemma:confidence_ellipsoid} with $d =mL + p\Ti$, and $\epsilon$ assesses the risk.
By Lemma~\ref{lemma:confidence_ellipsoid}, we know that the ellipsoid $\mathcal{M}_\epsilon(\uf,\wi)$ contains the true value of $\yf$ with probability $1-\epsilon$.
Depending on the designer's attitude toward the uncertainty captured by $\mathcal{M}_\epsilon$, two distinct control paradigms emerge.

\subsection{Robust control} \label{sec:robust}
In many control applications, minimizing the risk of poor performance due to model errors is an important requirement~\cite{mesbah2016stochastic}.
This naturally motivates a robust control framework where, we safeguard against the worst-case prediction within the confidence ellipsoid $\mathcal{M}_\epsilon(\uf,\wi)$.
This perspective leads to the min-max optimization problem
\begin{align} \label{prb:robust_control}
    \min_{\uf\in\mathcal{U}} ~\max_{\yf \in\mathcal{M}_\epsilon(\uf,\wi)}  J(\uf,\yf).
\end{align}
Problem \eqref{prb:robust_control} ensures that the computed control input $\uf$ remains performant even under the most adversarial true future output sequence within the ambiguity set around the estimate. 
While min-max optimization problems are generally computationally challenging, the specific structure of the predictive distribution~\eqref{eq:pred_distr_control} and the Gaussianity assumption allow for a tractable reformulation, as detailed in the following theorem.
\begin{thm} \label{thm:robust}
    The inner maximization problem in~\eqref{prb:robust_control} is upper bounded by
    \begin{align} \label{eq:robust_bound}
    \begin{split}
     \|\uf - & \ur\|_R^2 + \left\|\hat{M}\begin{bmatrix}
        \wi \\ \uf
    \end{bmatrix} - \yr \right\|_{\left( Q^{-1} - \frac{1}{\lambda} \Sph \right)^{-1}}^2 \\
    & +  \lambda r(\epsilon)\left \| \begin{bmatrix}
        \wi \\ \uf
    \end{bmatrix}\right \|_{P}^2 + \lambda r(\epsilon)
    \end{split}
    \end{align}
    for any $\lambda > \|Q\Sph\|_2$ and any $\uf$.
\end{thm}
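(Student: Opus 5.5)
The plan is to bound the inner maximization by Lagrangian (weak) duality for a quadratic objective over an ellipsoid. Fix $\uf$ and $\wi$, write $z := \begin{bmatrix}\wi^\top & \uf^\top\end{bmatrix}^\top$, $\hat y := \hat M z$, and $\rho := (1+\|z\|_P^2)\,r(\epsilon)$. With the substitution $y = \hat y + \delta$, the inner problem becomes
\begin{align*}
\max_{\delta} \; \|\uf-\ur\|_R^2 + \|\hat y - \yr + \delta\|_Q^2 \quad \mathrm{s.t.} \quad \delta^\top \Sph^{-1}\delta \le \rho .
\end{align*}
The input term does not depend on $\delta$, so it can be carried through unchanged.

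For any multiplier $\lambda \ge 0$, weak duality gives
\begin{align*}
\max_{\delta^\top\Sph^{-1}\delta\le\rho} \|c+\delta\|_Q^2 \;\le\; \sup_{\delta}\Big( \|c+\delta\|_Q^2 - \lambda\,\delta^\top\Sph^{-1}\delta \Big) + \lambda\rho ,
\end{align*}
where $c := \hat y - \yr$. The supremum is finite exactly when $\lambda\Sph^{-1} - Q \succ 0$. Since $\Sph \succ 0$ and $Q \succ 0$, this condition is equivalent to $\lambda I \succ \Sph^{1/2} Q \Sph^{1/2}$. That matrix has the same spectrum as $Q\Sph$, so its largest eigenvalue is at most $\|Q\Sph\|_2$, and $\lambda > \|Q\Sph\|_2$ suffices.

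Under that condition, the unconstrained concave quadratic in $\delta$ is maximized at $\delta^\star = (\lambda\Sph^{-1}-Q)^{-1}Qc$. The optimal value is
\begin{align*}
c^\top\big(Q + Q(\lambda\Sph^{-1}-Q)^{-1}Q\big)c .
\end{align*}
This is the step I expect to need the most care. I would apply the Woodbury identity to show
\begin{align*}
Q + Q(\lambda\Sph^{-1}-Q)^{-1}Q = \big(Q^{-1} - \tfrac{1}{\lambda}\Sph\big)^{-1},
\end{align*}
and note that $Q^{-1} - \tfrac{1}{\lambda}\Sph \succ 0$ is again equivalent to $\lambda > \lambda_{\max}(Q^{1/2}\Sph Q^{1/2})$, the same spectral condition. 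This identifies the output term of~\eqref{eq:robust_bound} as $\|c\|^2_{(Q^{-1}-\frac{1}{\lambda}\Sph)^{-1}}$.

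Finally, expanding $\lambda\rho = \lambda r(\epsilon)\|z\|_P^2 + \lambda r(\epsilon)$ and adding back $\|\uf-\ur\|_R^2$ gives exactly~\eqref{eq:robust_bound}. The bound holds for every admissible $\lambda$ and every $\uf$, because nothing beyond weak duality was used. I would also remark, though it is not needed for the claim, that the S-lemma makes this bound tight after minimizing over $\lambda$, since there is a single quadratic constraint and Slater's condition holds.
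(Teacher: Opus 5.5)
Your proposal is correct and follows the same route as the paper: weak Lagrangian duality on the single ellipsoidal constraint, then closed-form evaluation of the unconstrained quadratic in the output, which is well posed because $\lambda > \|Q\Sph\|_2$ forces $\lambda\Sph^{-1} - Q \succ 0$. Your similarity argument for that implication and your Woodbury identity $Q + Q(\lambda\Sph^{-1}-Q)^{-1}Q = (Q^{-1}-\tfrac{1}{\lambda}\Sph)^{-1}$ spell out two steps the paper asserts without computation; note that only sufficiency is needed for finiteness of the supremum, not your ``exactly when''.
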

\begin{proof}
First, we rewrite the inner problem in~\eqref{prb:robust_control} as the minimization
\begin{align} \label{eq:proof_dual}
\min_{\yf \in\mathcal{M}_\epsilon(\uf,\wi)} \quad & -\|\uf - \ur\|_R^2 - \|\yf - \yr\|_Q^2.
\end{align}
The Lagrange dual function of~\eqref{eq:proof_dual} with multiplier $\lambda \ge 0$ is given by the unconstrained minimization problem
\begin{align*}
f(\lambda) =  \min_{\yf} & - \|\yf - \yr\|_Q^2 + \lambda \left \|\yf - \hat{M}\begin{bmatrix}
        \wi \\ \uf
    \end{bmatrix}\right\|^2_{\Sph^{-1}} \\
& - \|\uf-\ur\|_R^2 
- \lambda r(\epsilon)\left\|\begin{bmatrix}
    \wi \\ \uf
\end{bmatrix}\right\|_{P}^2 - \lambda r(\epsilon).
\end{align*}
The cost in $f(\lambda)$ is quadratic and strictly convex in $\yf$, since $\lambda>\|Q\Sph\|_2$ implies $\lambda \Sph^{-1} - Q \succ 0$.
The minimizer in $f(\lambda)$ is the average of $\yr$ and $\hat{M}\begin{bmatrix}
        \wi \\ \uf
    \end{bmatrix}$
weighted by $-Q$ and $\lambda\Sph^{-1}$~\cite[Chp. 15]{boyd2018introduction} and the negative dual function $-f(\lambda)$ coincides with the bound~\eqref{eq:robust_bound}.
Since the dual function is a lower bound on~\eqref{eq:proof_dual}, $-f(\lambda)$ is an upper bound on the inner maximization~\eqref{prb:robust_control} for any $\lambda$ and $\uf$.
\end{proof}

The upper bound established in Theorem~\ref{thm:robust} corresponds to the Lagrange dual of the inner maximization problem in~\eqref{prb:robust_control}, where $\lambda$ serves as the Lagrange multiplier associated with the uncertainty constraint.
The first term in~\eqref{eq:robust_bound} is the standard input penalty originating from the nominal control cost.
The second term minimizes the output tracking cost, while systematically incorporating the prediction uncertainty of $\yf$ as captured by the estimated covariance matrix $\Sph$.
The third term scales with the confidence level $\epsilon$ and accounts for the uncertainty introduced by the control actions.
As established in Lemma~\ref{lemma:confidence_ellipsoid}, the size of the confidence ellipsoid depends on the magnitude of the free variables; consequently, this term penalizes the application of large inputs $\uf$.
The fourth term is a constant with respect to $\uf$.

Because this bound is derived from the dual function, a joint minimization over both the control input $\uf$ and the multiplier $\lambda$ would yield an exact reformulation of the original min-max problem~\eqref{prb:robust_control}~\cite[Apx.~B]{boyd2004convex}.
However, this joint optimization results in a non-convex problem. 
Conversely, when $\lambda$ is fixed to any value satisfying $\lambda > \|Q\Sph\|_2$, \eqref{eq:robust_bound} is strictly convex with respect to $\uf$.
Being a dual function, \eqref{eq:robust_bound} is also convex in $\lambda$ for a fixed $\uf$.
One can therefore employ a primal-dual algorithm that alternates between these two convex problems.
For simplicity, we instead treat $\lambda$ as a constant tuning parameter rather than an optimization variable.
The choice of $\lambda$ dictates the trade-off between tracking performance and robustness to uncertainty.
Selecting large values for~$\lambda$ places the emphasis on the uncertainty constraints, leading to overly conservative control actions. 
On the other hand, as~$\lambda$ approaches the lower bound of $\|Q\Sph\|_2$, the second term in the cost function begins to dominate. 
In this case, the controller becomes less conservative and attempts to match the predicted mean closely to the reference $\yr$.

With $\lambda$ fixed to a chosen value, we can minimize the upper bound~\eqref{eq:robust_bound} with respect to $\uf$.
This leads to the following tractable convex problem that provides a conservative upper bound for the original min-max formulation in~\eqref{prb:robust_control}

\begin{align} \label{prb:robust}
    \min_{\uf\in\mathcal{U}}\quad  \eqref{eq:robust_bound}.
\end{align}

\begin{rmk}
From a risk-assessment perspective, the inner maximization problem in~\eqref{prb:robust_control} upper-bounds the \gls{VaR} of the control cost, defined as the infimum over all thresholds $\gamma$ satisfying $\mathrm{Pr}(J(\uf, \yf) \leq \gamma) \geq 1-\epsilon$.
Let $$\gamma^\star = \max_{\yf \in \mathcal{M}_\epsilon(\uf,\wi)} J(\uf, \yf)$$ denote the worst-case cost over the confidence ellipsoid.
Because $\gamma^\star$ is the maximum value, it holds that
$$\mathcal{M}_\epsilon(\uf,\wi) \subseteq \{\yf~|~ J(\uf, \yf) \le \gamma^\star\}.$$ 
Furthermore, the monotonicity of probability measures implies 
$$
\mathrm{Pr}\big(J(\uf, \yf) \le \gamma^*\big) \ge \mathrm{Pr}\big(\yf \in \mathcal{M}_\epsilon(\uf,\wi)\big) = 1-\epsilon.
$$
Thus, $\gamma^\star$ satisfies the probability threshold constraint, making it a valid upper bound on the \gls{VaR}, with respect to both the aleatoric and epistemic uncertainty.
\end{rmk}

\subsection{Optimistic control}
Predictive uncertainty can also be handled with ``optimism in the face of uncertainty", which is often employed in adaptive settings to foster exploration~\cite{abbasi2011regret}.
This perspective is of interest because it reveals that regularized \gls{DeePC} is fundamentally rooted in an optimistic interpretation of prediction uncertainty.
By optimizing over the best-case realizations of the future trajectory within the uncertainty set to reduce the control cost, we obtain the optimistic formulation
\begin{align*}
    \min_{\uf\in\mathcal{U}} ~\min_{\yf \in\mathcal{M}_\epsilon(\uf,\wi)} J(\uf,\yf)
\end{align*}

While this formulation captures the optimistic perspective, the minimization over $\uf$ renders the problem non-convex.
This non-convexity arises due to the epistemic uncertainty, which grows with the magnitude of the control input.
Specifically, larger inputs increase the prediction uncertainty, which expands the confidence ellipsoid.
The optimistic controller can exploit this expansion by selecting larger inputs to widen the bounds and achieve a lower control cost.
A practical way to prevent this behavior is to assume the uncertainty radius in~\eqref{eq:def_M} does not depend on the control effort (which is the case when there is only aleatoric uncertainty).
Replacing the right hand side of the bound~\eqref{eq:def_M} with a constant value $\alpha > 0$ yields the following tractable problem

\begin{align*}
\min_{\uf \in \mathcal{U}} \min_{\yf} & \quad J(u_f, y_f) \\
\text{s.t.} & \quad \left\|y_f - \hat{M}\begin{bmatrix}
    \wi \\ \uf
\end{bmatrix}\right\|_{\Sph^{-1}}^2 \leq \alpha.
\end{align*}
By applying Lagrange multipliers to soften the constraint on $\yf$, the min-min problem reduces to
\begin{align} \label{eq:min-min}
\min_{\uf \in \mathcal{U},\yf} \; J(u_f, y_f) + \lambda\left(\left\|y_f - \hat{M}\begin{bmatrix}
    \wi \\ \uf
\end{bmatrix}\right\|_{\Sph^{-1}}^2 - \alpha\right).
\end{align}

As established in our preliminary work~\cite{sasfi2025gaussian}, problem~\eqref{eq:min-min} recovers the regularized \gls{DeePC} algorithm~\eqref{eq:DeePC}.
\begin{thm} \label{thm:DeePC_eqv}
If $(\uf^\star,\yf^\star,g^\star)$ is a minimizer of \eqref{eq:DeePC} with $\mathcal{Y}=\mathbb{R}^{p\Ti}$, and $h(g)$ defined as in~\eqref{eq:proj-based-reg}, then $(\uf^\star,\yf^\star)$ is a minimizer of \eqref{eq:min-min} with $\lambda = \lambda_g \frac{2}{D}$.
\end{thm}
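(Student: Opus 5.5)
The plan is to eliminate the variable $g$ from \eqref{eq:DeePC} by partial minimization. For fixed $(\uf,\yf)$ this will give an explicit penalty, which we then match term by term with the penalty in \eqref{eq:min-min}.

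Throughout, write $Z := \begin{bmatrix} W_\mathrm{p} \\ U_\mathrm{f}\end{bmatrix}$, $z := \begin{bmatrix}\wi \\ \uf\end{bmatrix}$ and $\Pi_Z := Z^\dagger Z$, the orthogonal projector onto the row space of $Z$. We use the rank condition $\mathrm{rank}(Z)=d_\mathrm{f}$ from Proposition~\ref{prop:ID_pred} (full row rank) and the invertibility of $\Sph$, as assumed in this section.

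\emph{Step 1: parameterize the feasible $g$.} Since $Z$ has full row rank, the constraint $Z g = z$ is solvable for every $z$. Its solutions are exactly $g = Z^\dagger z + s$ with $s\in\ker Z = \mathrm{im}(I-\Pi_Z)$. Along such $g$ we have $(I-\Pi_Z)g = s$, because $(I-\Pi_Z)Z^\dagger=0$. Hence $h(g)=\|s\|_2^2$. The remaining constraint becomes
\[
\yf = Y_\mathrm{f} g = \hat{M} z + Y_\mathrm{f} s ,
\]
using $\hat M = Y_\mathrm{f}Z^\dagger$ from \eqref{eq:pred_distr_control_estim}. Thus, for fixed $(\uf,\yf)$, the inner problem is
\[
\min_{s}\ \|s\|_2^2 \quad \text{s.t.}\quad Y_\mathrm{f} s = \yf - \hat M z,\; (I-\Pi_Z)s = s .
\]

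\emph{Step 2: solve the inner least-norm problem.} Substitute $s=(I-\Pi_Z)v$ and set $G := Y_\mathrm{f}(I-\Pi_Z)$. The problem becomes a minimum-norm solution of $G v = \yf-\hat M z$.
\begin{itemize}
\item By \eqref{eq:pred_distr_control_estim}, $GG^\top = D\,\Sph$, which is invertible by assumption.
\item Therefore $G$ has full row rank, the constraint is feasible for every $\yf$, and the optimal value is
\[
(\yf-\hat M z)^\top (GG^\top)^{-1}(\yf-\hat M z) = \tfrac{1}{D}\,\|\yf-\hat M z\|^2_{\Sph^{-1}} .
\]
\end{itemize}
Consequently, with $\mathcal{Y}$ unconstrained, \eqref{eq:DeePC} is equivalent (in the sense of partial minimization) to
\[
\min_{\uf\in\mathcal{U},\,\yf}\; J(\uf,\yf) + \tfrac{\lambda_g}{D}\,\|\yf-\hat M z\|^2_{\Sph^{-1}} .
\]

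\emph{Step 3: transfer minimizers.} Problem \eqref{eq:min-min} differs from the reduced problem only by the additive constant $-\lambda\alpha$ and by the scaling of the penalty weight. For any feasible $(\uf,\yf)$, a minimizer $g^\star$ attains the partial minimum. Hence if $(\uf^\star,\yf^\star,g^\star)$ minimizes \eqref{eq:DeePC}, then $(\uf^\star,\yf^\star)$ minimizes the reduced objective, and therefore also \eqref{eq:min-min} once the weights are matched.

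The main point to check is the exact constant relating $\lambda$ and $\lambda_g$. The direct computation above gives $\lambda_g/D$, while the statement claims $2\lambda_g/D$. I expect the factor $2$ to come from a normalization convention, for instance a factor $\tfrac12$ in $J$ or in the Gaussian negative log-likelihood $\tfrac12\|\cdot\|^2_{\Sph^{-1}}$ inherited from \cite{sasfi2025gaussian}. I would reconcile this by carefully tracking how $J$ and the penalty in \eqref{eq:min-min} are scaled, and adjust the stated relation if necessary. The rest of the argument is independent of this constant.
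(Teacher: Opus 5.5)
The paper does not prove this theorem in the text; it defers to its preliminary version \cite{sasfi2025gaussian}. Your argument is a complete proof: eliminate $g$ by partial minimization, reduce to a least-norm problem, and use $GG^\top = D\Sph$.

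Your use of full row rank of $Z$ and of $\Sph \succ 0$ is genuinely needed, even though the theorem leaves it implicit. Without full row rank, \eqref{eq:DeePC} confines $\begin{bmatrix}\wi^\top & \uf^\top\end{bmatrix}^\top$ to $\mathrm{im}(Z)$, while \eqref{eq:min-min} does not.

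One step in Step 2 should be stated more carefully. After substituting $s=(I-\Pi_Z)v$, the objective is $\|(I-\Pi_Z)v\|_2^2$, not $\|v\|_2^2$. It is cleaner to minimize $\|s\|_2^2$ subject to $Gs=\yf-\hat M z$ over all $s$. The minimizer $G^\top(GG^\top)^{-1}(\yf-\hat M z)$ already lies in $\mathrm{im}(I-\Pi_Z)=\ker Z$, so the kernel constraint is inactive.

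On the constant, your computation is correct, and you should commit to it rather than look for a convention that rescues the factor $2$. With this paper's definitions ($\|x\|_P^2=x^\top P x$, $h$ an unscaled squared norm, $\Sph$ normalized by $1/D$), \eqref{eq:DeePC} reduces to $J+\tfrac{\lambda_g}{D}\|\yf-\hat M z\|^2_{\Sph^{-1}}$. The correct relation is therefore $\lambda=\lambda_g/D$.

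Your guess of a factor $\tfrac12$ in $J$ cannot explain the discrepancy. $J$ enters both problems identically, and only the ratio of the penalty weight to $J$ matters. The relation $\lambda = 2\lambda_g/D$ would hold only if the penalty in \eqref{eq:min-min} were the Gaussian negative log-likelihood $\tfrac{\lambda}{2}\|\yf-\hat M z\|^2_{\Sph^{-1}}$.

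The factor is not cosmetic. For fixed $\uf$, the minimizing $\yf$ of $J+c\|\yf-\hat M z\|^2_{\Sph^{-1}}$ satisfies $\yf-\yr=c\,(Q+c\Sph^{-1})^{-1}\Sph^{-1}(\hat M z-\yr)$. This changes when $c$ is doubled unless $\hat M z=\yr$. Hence, for $\lambda_g>0$, the statement as printed (with $\lambda=2\lambda_g/D$) fails in general. Also, $\mathcal{Y}=\mathbb{R}^{p\Ti}$ in the statement should read $\mathbb{R}^{p\Tf}$.
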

Theorem~\ref{thm:DeePC_eqv} establishes an exact correspondence between the optimistic formulation~\eqref{eq:min-min} and regularized \gls{DeePC}, revealing that \gls{DeePC} adopts an optimistic stance toward prediction uncertainty.
The two problems yield the same optimal input, and the Lagrange multiplier $\lambda$ plays the same role as the regularization weight $\lambda_g$ in~\eqref{eq:DeePC}.
The two problems differ only in their decision variables, as \gls{DeePC} optimizes over $g$, whose dimension scales with the data size $D$, whereas~\eqref{eq:min-min} optimizes directly over $\uf$ and $\yf$.
Moreover, since the uncertainty radius was fixed the a constant $\alpha$, \gls{DeePC} accounts only for the aleatoric noise and ignores epistemic uncertainty.

Although this optimistic approach provides a novel interpretation of \gls{DeePC} in the stochastic setting, it can be problematic in predictive control.
While the robust formulation in Section~\ref{sec:robust} guards against the worst-case output in the confidence ellipsoid, \eqref{eq:min-min} optimizes over the best case, using the freedom in $\yf$ to lower the control cost.
In particular, this optimism can severely degrade performance, especially when the confidence ellipsoid is large, or correspondingly, $\lambda$ is small.
This structural insight explains empirical observations in regularized \gls{DeePC}, where closed-loop performance has been shown to increase monotonically with the regularization parameter $\lambda_g$~\cite{dorfler2022bridging}.

\section{Case studies} \label{sec:case_studies}
\begin{figure*}[t!]
    \centering
    \begin{subfigure}[t]{0.49\textwidth}
        \includegraphics[width=\linewidth]{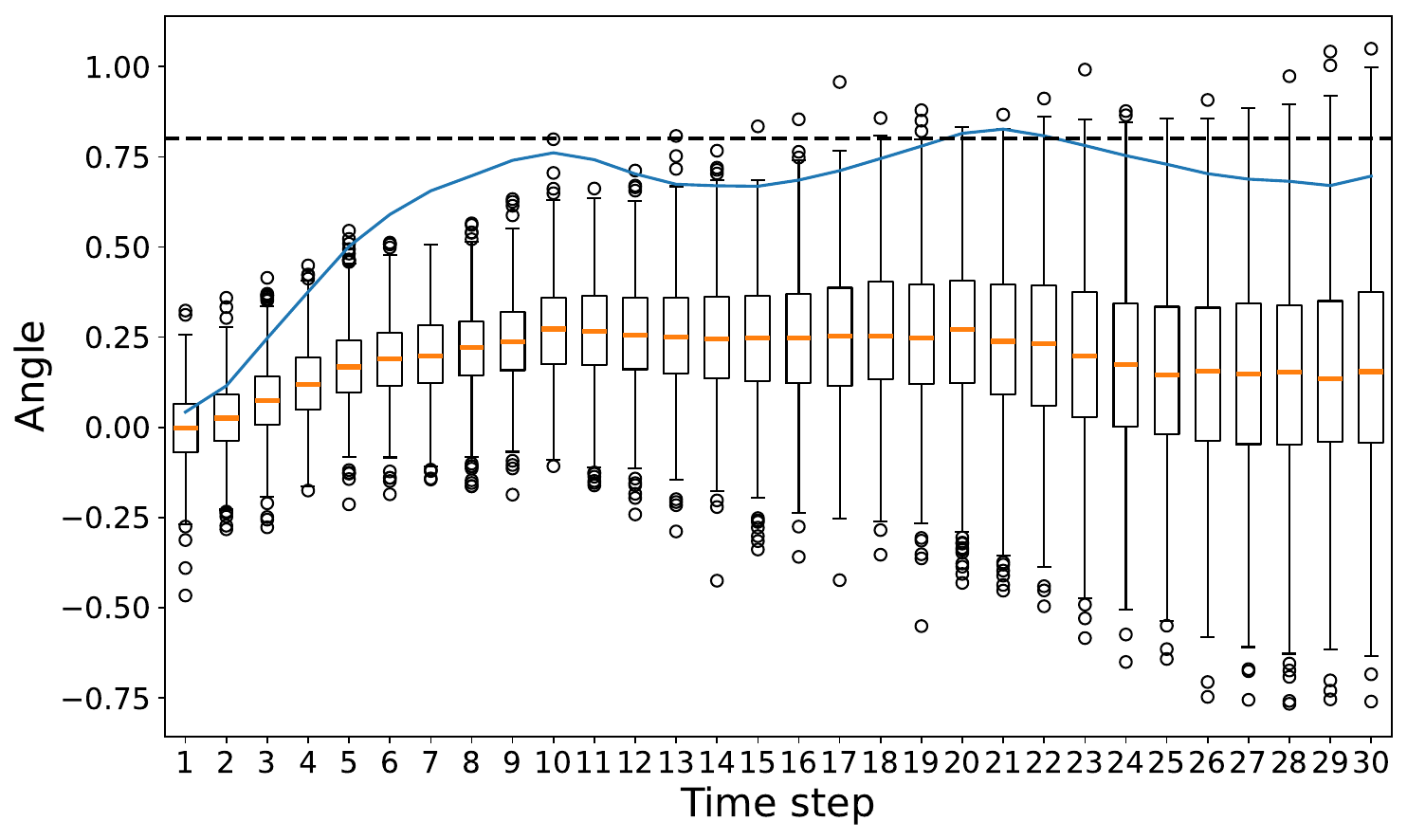}    
        \caption{DeePC}
    \end{subfigure}
    \begin{subfigure}[t]{0.49\textwidth}
        \includegraphics[width=\linewidth]{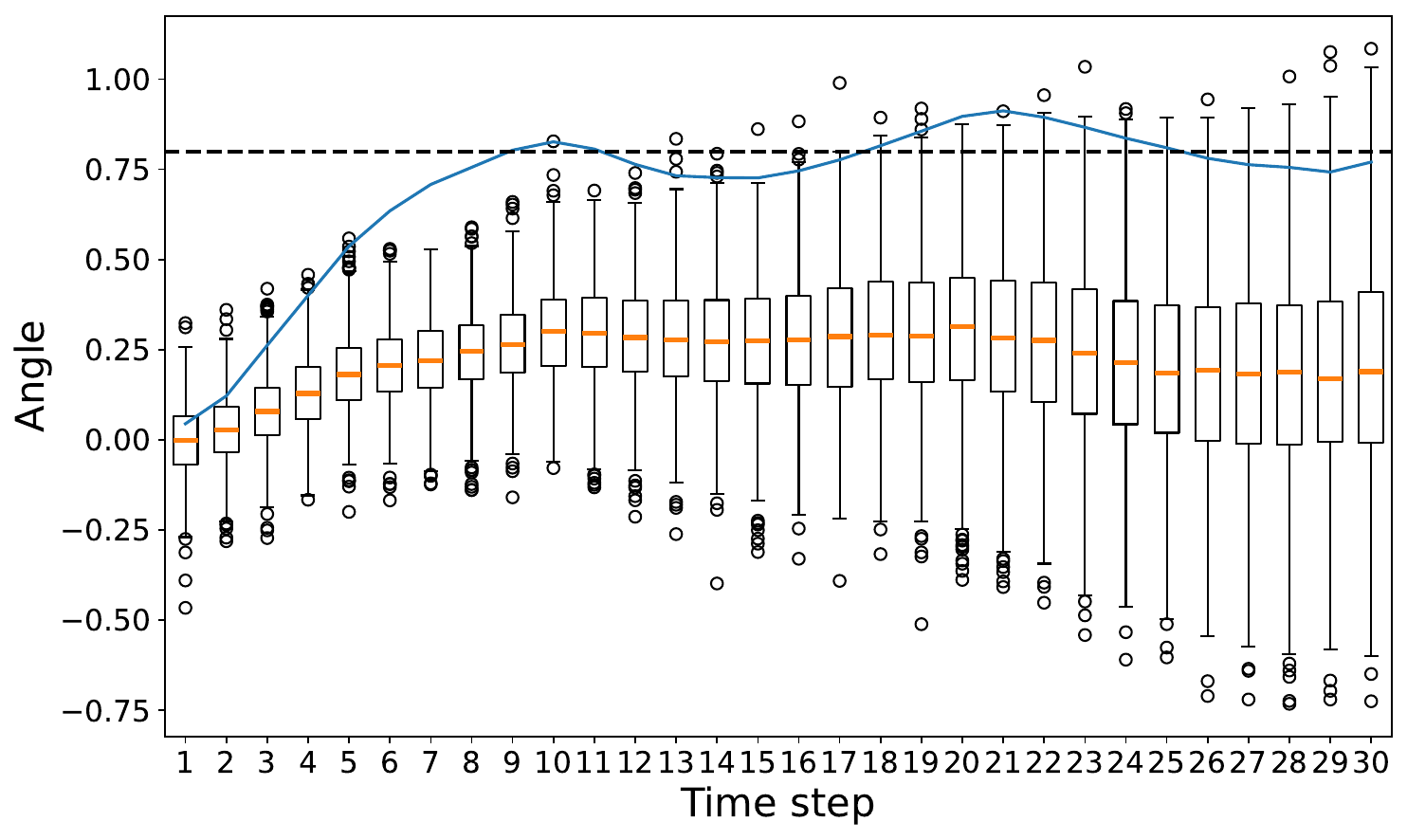}
        \caption{SPC}
    \end{subfigure}

    \begin{subfigure}[b]{0.49\textwidth}
        \includegraphics[width=\linewidth]{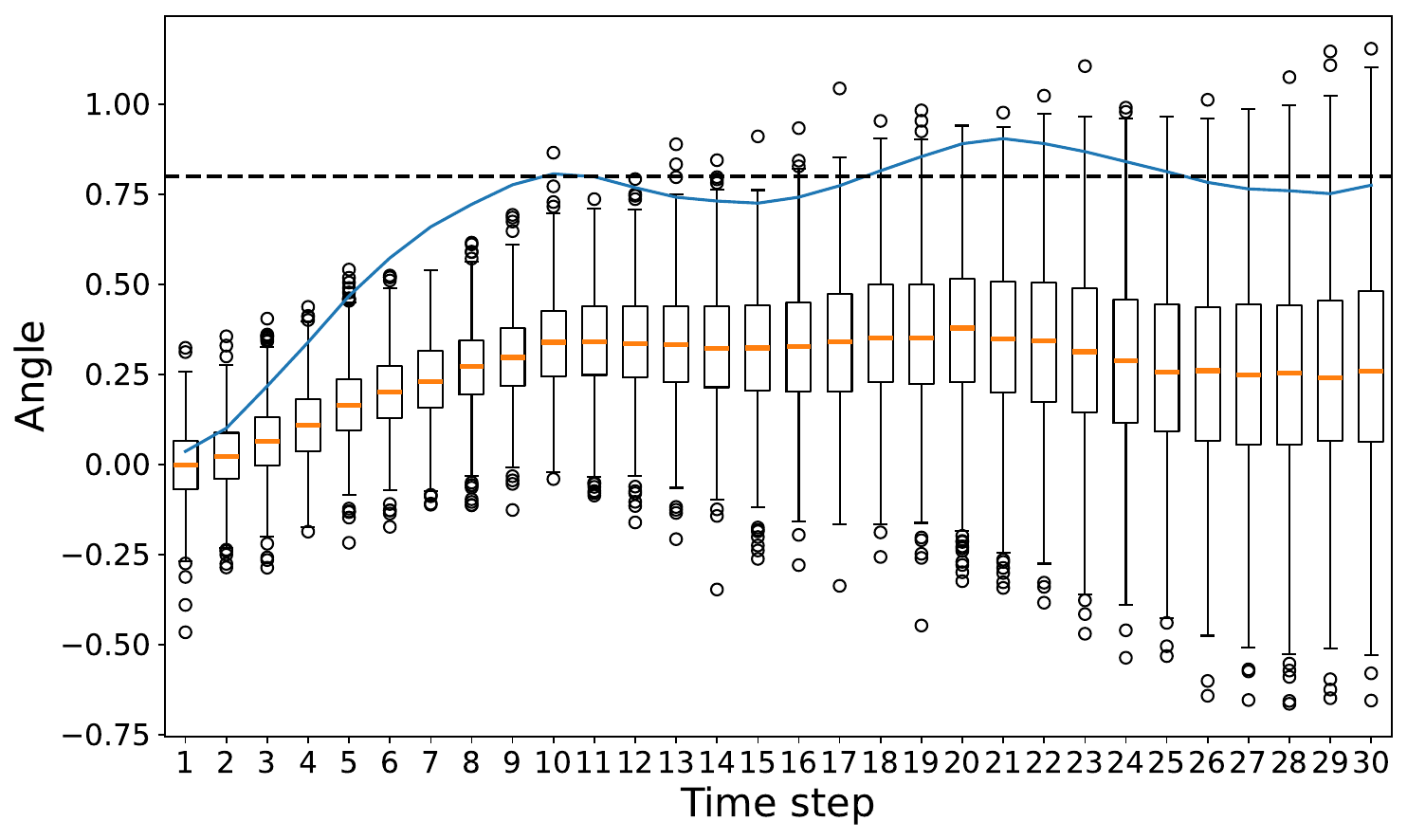} 
        \caption{MSM-SMPC}
    \end{subfigure}
    \begin{subfigure}[b]{0.49\textwidth}
        \includegraphics[width=\linewidth]{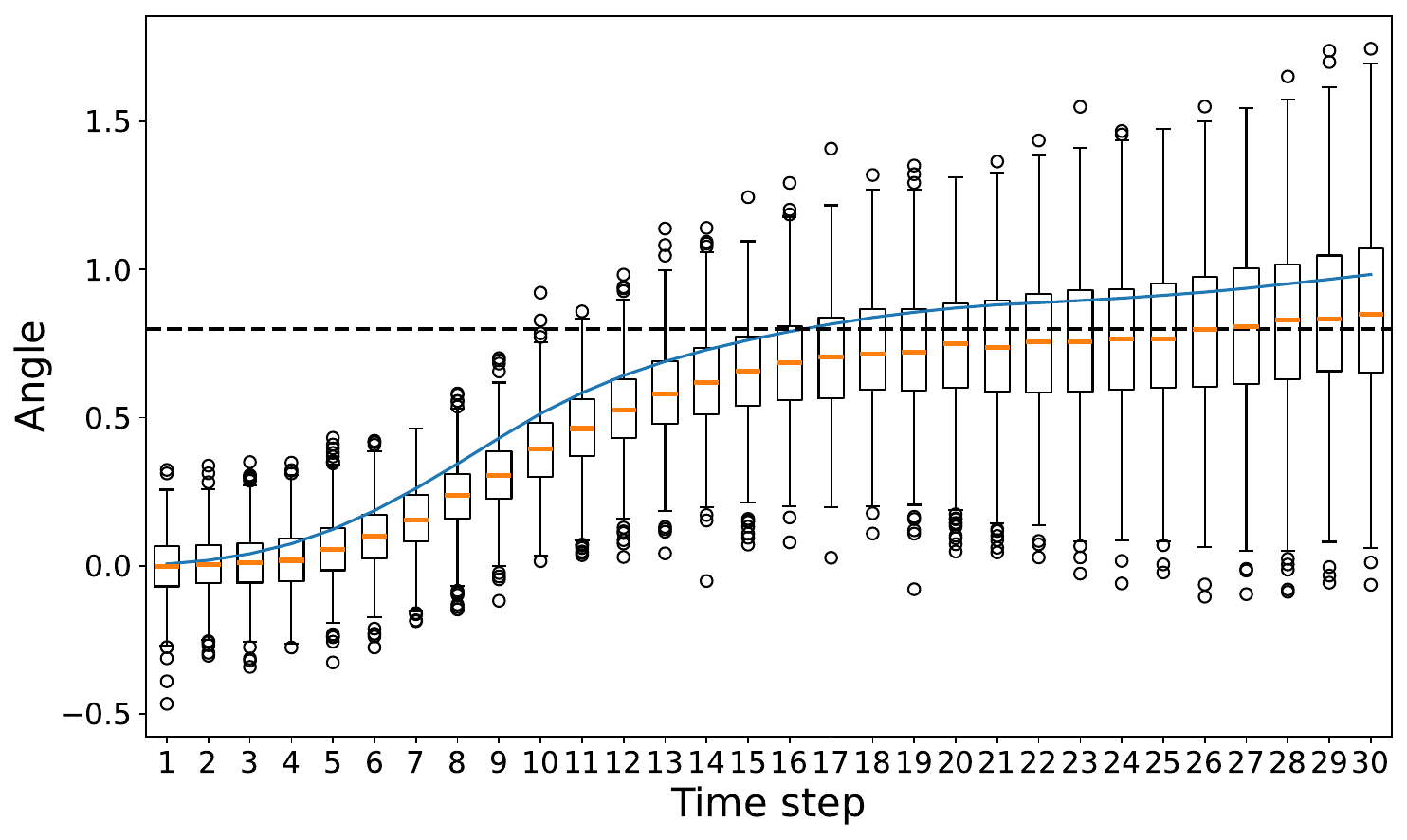}
        \caption{GB-Robust}
    \end{subfigure}
    
    \caption{Box plots of the realized output at each time step over 1000 noise realizations, for (a) DeePC, (b) SPC, (c) MSM-SMPC, and (d) GB-Robust, applied to the double spring-mass-damper system. 
    The orange lines are the realized outputs' median, the blue line is the predicted mean and the dashed line denotes the reference angle.
    By accounting for the uneven excitation in the data, GB-Robust acts cautiously, and thus, its realized outputs stay close to the predicted mean with small variance, whereas the other methods deviate substantially.}
    \label{fig:DSMD_output}
\end{figure*}
We demonstrate the proposed framework on two case studies.
First, we consider a double spring-mass-damper system with unevenly excited offline data.
We compare our robust formulation~\eqref{prb:robust_control} against \gls{SPC}, which uses only the mean prediction, \gls{DeePC}, which treats uncertainty optimistically, and the method in~\cite{fiedler2023probabilistic}, which accounts for epistemic uncertainty in a risk-agnostic manner. 
Second, we consider a building control example, where we illustrate how uncertain weather forecasts are incorporated into the predictive distribution following Section~\ref{sec:forecast}, and evaluate the chance-constrained formulations of Section~\ref{sec:CE_control} with and without disturbance feedback policies.
All methods were implemented in Python using CVXPY~\cite{diamond2016cvxpy} with MOSEK~\cite{mosek}, and the code is available online at \url{https://gitlab.ethz.ch/asasfi/gaussian_behaviors}.

\subsection{Double spring-mass-damper system}
We consider the discrete-time double spring-mass-damper system from~\cite{kerz2023data}, which consists of two rotating discs. 
The system states are the angles and angular velocities of the two discs, yielding a fourth-order system, with the angle of the first disc as the measured output.
The system has two inputs, namely torques applied to the first and the second discs, respectively.
Process noise enters the system as torques acting on both discs, and the output is corrupted by additive measurement noise. 
Both noise sources are zero-mean Gaussian with covariance $0.01\cdot I$.
The initial and prediction horizon lengths are set to $\Ti  = 4$ and $\Tf = 30$.
The control objective is to track a constant reference of $0.8$ rad from zero initial conditions, with cost weighting matrices $Q=I$ and $R=0.5\cdot I$.
The continuous-time model is discretized with sampling time $0.1$ second, and offline data are collected by simulating the system for $T=500$ time steps starting from zero initial conditions under a zero-mean Gaussian input.
Notably, the two inputs in the offline data are excited with significantly different intensities: the first input has variance $I$, while the second has variance $25\cdot I$.

We compare our robust formulation~\eqref{prb:robust}, termed GB-Robust, against three methods from the data-driven control literature: \gls{DeePC}~\eqref{eq:DeePC}, \gls{SPC}~\eqref{eq:SPC}, and the method from~\cite{fiedler2023probabilistic}, termed MSM-SMPC, which accounts for epistemic uncertainty.
MSM-SMPC minimizes the expected control cost subject to the data-dependent variance of the prediction, similarly to the confidence bound~\eqref{eq:def_M}, leading to
\begin{align}
\begin{split}
    \min_{\uf} \quad  & \|\uf - \ur\|_R^2 + \left\|\hat{M}\begin{bmatrix}
        \wi \\ \uf
    \end{bmatrix} - \yr \right\|_Q^2 \\
    & + \mathrm{tr}\left(Q\Sph\left(1+\left\|\begin{bmatrix}
        \wi \\ \uf 
    \end{bmatrix}\right\|_P^2\right)\right).
\end{split}
\end{align}
The final term in the cost penalizes the trace of the predicted output covariance weighted by $Q$.
In our robust method, we set $\epsilon = 0.1$, and the parameter $\lambda$ to be two times the lower bound, that is, $\lambda = 2\|Q \Sph\|_2$.
Furthermore, we use the efficient implementation of \gls{DeePC} from~\cite{sasfi2025gaussian}, and set $\lambda=50$.

To evaluate the methods, we start from zero initial conditions and solve each of the four optimization problems once.
We then apply the resulting input sequence to the system and simulate $1000$ independent realizations of the process and measurement noise.
Figure~\ref{fig:DSMD_output} shows box plots of the realized output at each time step, together with the predicted mean.
The control objective is to track the angle of the first disc, which is most efficiently achieved using the first actuation channel.
Accordingly, \gls{DeePC}, \gls{SPC}, and MSM-SMPC apply large torques on the first disc, as seen in Figure~\ref{fig:DSMD_input}.
However, the first actuation channel was excited far less than the second in the offline data, so predictions based on large torques on the first disc have high epistemic uncertainty.
Opposed to \gls{DeePC} and \gls{SPC}, GB-Robust accounts for this and reaches the target angle predominantly through applying torque on the second disc.
MSM-SMPC also considers this epistemic uncertainty, but to a much lesser extent.
This difference is reflected in the output trajectories in Figure~\ref{fig:DSMD_output}.
For GB-Robust, the realized outputs stay close to the predicted mean, since the controller acts cautiously.
For the other methods, the predictions are notably worse, as the realized outputs deviate substantially from the predicted mean.

We now examine the realized control costs, averaged over the $1000$ noise realizations.
\gls{SPC} attains a mean cost of $11.8$ with standard deviation $4.4$, and MSM-SMPC a mean cost of $10.3$ with standard deviation $3.9$.
The performance of \gls{DeePC} and GB-Robust depends on their tuning parameters, $\lambda_g$ and $\lambda$ respectively, whose effect on the realized cost is reported in Table~\ref{tab:costs}.
\gls{DeePC} improves as $\lambda_g$ increases, converging to \gls{SPC} for large $\lambda_g$, consistent with the optimistic interpretation of Section~\ref{sec:uncertain_control}.
Conversely, GB-Robust improves as $\lambda$ decreases, attaining its lowest cost at $\lambda = 130$, close to the lower bound $\|Q\Sph\|_2$.
Overall, GB-Robust achieves a lower realized cost than all competing methods.
\begin{table}
\setlength{\tabcolsep}{4pt}
\begin{tabular}{l|c|c|c|c|c} 
\multicolumn{6}{c}{\gls{DeePC}} \\
\hline
$\lambda_g$ & 10 & 30 & 50 & 100 & 200 \\
\hline
Cost & $15.7 \pm 5.4$ & $13.2 \pm 4.8$ & $12.7 \pm 4.6$ & $12.2 \pm 4.5$ & $12.0 \pm 4.4$ \\
\hline \\
\multicolumn{6}{c}{GB-Robust} \\
\hline
$\lambda$ & 130 & 160 & 200 & 250 & 300 \\
\hline
Cost & $6.0 \pm 1.2$ & $6.1 \pm 1.2$ & $6.2 \pm 1.2$ & $6.4 \pm 1.3$ & $6.5 \pm 1.3$ \\
\end{tabular}
\caption{Realized control cost (mean $\pm$ one standard deviation over $1000$ realizations) for \gls{DeePC} and GB-Robust, for varying tuning parameters $\lambda_g$ and $\lambda$.
The parameter-free \gls{SPC} and MSM-SMPC attain $11.8 \pm 4.4$ and $10.3 \pm 3.9$, respectively. \label{tab:costs}}
\end{table}

\begin{figure}[t!]
    \centering
    \begin{subfigure}[t]{\linewidth}
        \includegraphics[width=\linewidth]{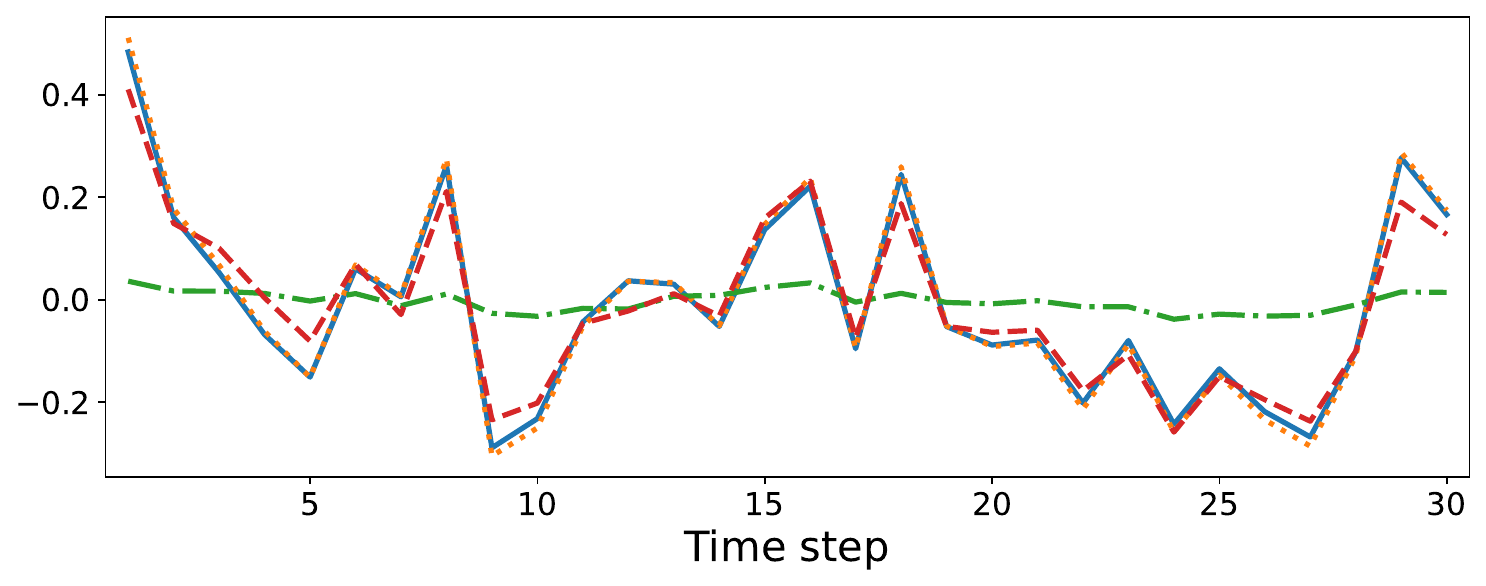}    
        \caption{Torque on first disc}
    \end{subfigure}
    \begin{subfigure}[t]{\linewidth}
        \includegraphics[width=\linewidth]{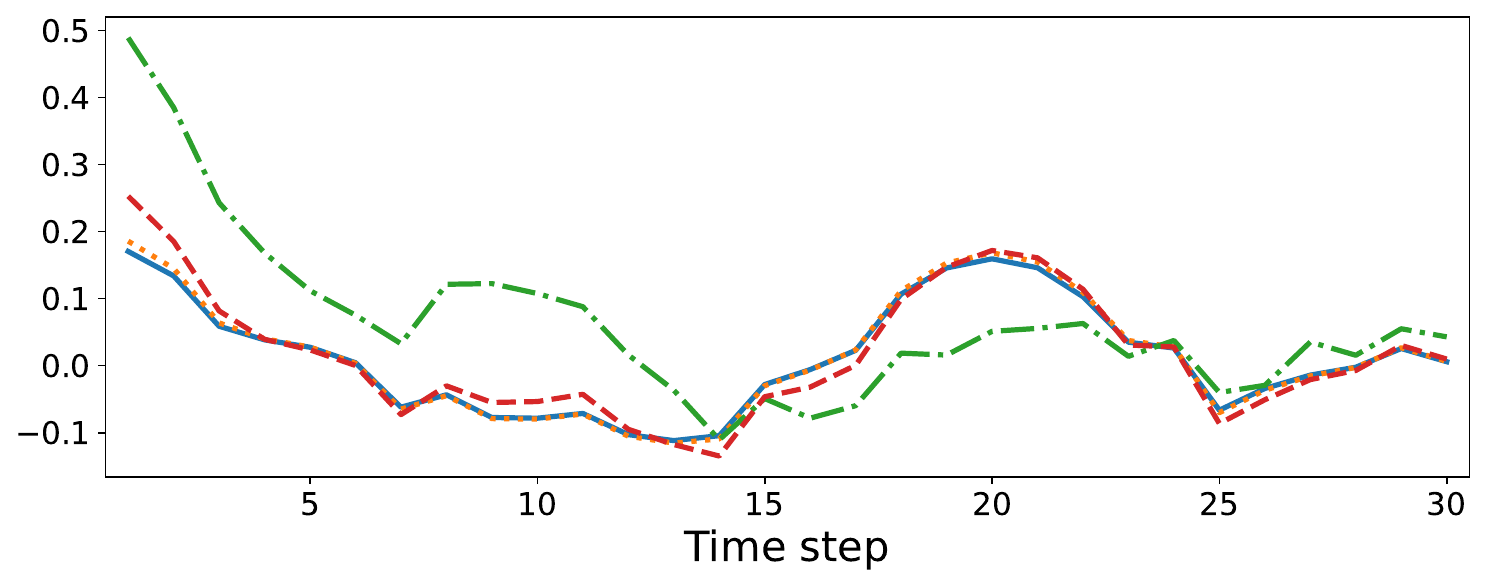}
        \caption{Torque on second disc}
    \end{subfigure}
    \caption{Optimal input sequences for the torque on the (a) first and (b) second disc, for DeePC (blue solid), SPC (orange dotted), MSM-SMPC (red dashed), and GB-Robust (green dash-dotted), applied to the double spring-mass-damper system.
    The first input channel is much less excited than the second in the offline data leading to uneven epistemic uncertainty.
    Our GB-Robust method accounts for this and reaches the target angle predominantly through applying torque on the second disc.}
    \label{fig:DSMD_input}
\end{figure}

\subsection{HVAC building control} \label{sec:HVAC}
We compare the performance of the control methods from Sections~\ref{sec:CE_feedforward} and~\ref{sec:dist_feedback} on a case study involving a \gls{HVAC} system.
We consider the simple building model adapted from~\cite{gwerder2005predictive}
\begin{align*}
    \dot{x}(t) = Ax(t) + B_u u(t) + B_v v(t),
\end{align*}
where the state $x(t) \in \mathbb{R}^7$ consists of the room temperature, and 3-3 temperature values of the thermal knots in the inner parts and the outside envelope of the room.
The system has two inputs, i.e., $u(t)\in\mathbb{R}^2$. 
The first input is the heating power, acting on the room directly, and the second is the cooling power, acting on the ceiling thermal knot.
Disturbance acts on the system in the form of internal and secondary heat gains, and varying outside air temperature.
We build a simulator of the system by discretizing the continuous-time model with Euler's method, using a sampling time of 1 minute.
The sampling time for the controller is 30 minutes.
The inputs must be positive, and the room temperature must be between $21$ and $26$ degrees with probability $99\%$.

We assume that an uncertain forecast is available for the outside air temperature, and incorporate this information into the prediction, as described in Section~\ref{sec:forecast}.
Thus, the estimated predictive distribution takes into account the uncertainty of the prediction.
We model the $k$-th entry of the forecast error $e_v$ as the cumulative sum $\sum_{j=1}^{k} e_j$ of i.i.d. increments $e_j \sim \mathcal{N}(0,0.3^2)$.
Equivalently, the forecast of the temperature $k$ steps ahead is corrupted by the accumulated noise of all previous steps.
This model resembles real forecasts, as it captures growing uncertainty over the horizon, while remaining smooth.

We compare the two formulations from Section~\ref{sec:CE_control}, namely the chance-constrained problem~\eqref{eq:stoch_control} with a feedforward input sequence, and its extension~\eqref{prb:feedback_policy}, which optimizes over disturbance affine feedback policies.
The two formulations differ only in the input parameterization, that is,~\eqref{eq:stoch_control} corresponds to~\eqref{prb:feedback_policy} with $K=0$ fixed.
We first compare the open-loop predictions.
Figures~\ref{fig:open_loop-nofb} and~\ref{fig:open_loop-fb} show the predicted mean (solid blue line) and the confidence bounds computed from $\Sph$ (shaded blue region) for the feedforward and the feedback policies, respectively.
To validate the predictions, we apply the same optimal input policy to the system and simulate $100$ noise realizations, depicted in orange.
As a cold day is simulated, the optimal cooling power is zero for both methods.
With the feedforward input, the predicted uncertainty of the room temperature accumulates over the horizon, and the controller must keep the predicted mean far from the lower comfort constraint to satisfy the chance constraints, see Figure~\ref{fig:open_loop-nofb}. 
In contrast, the feedback policy reacts to realized prediction errors and thereby reduces the output variance, allowing the controller to plan trajectories closer to the constraint without sacrificing safety. 
As a trade-off, the inputs become stochastic, since the applied heating power depends on the realized errors, which is visible in the spread of the input realizations in Figure~\ref{fig:open_loop-fb}.
Lastly, note that the methods of Section~\ref{sec:CE_control} only account for aleatoric uncertainty.
As a result, the predicted and true output distributions do not match exactly, which is visible in the realizations slightly exceeding the confidence bounds.

Next, we compare the closed-loop performance of the two formulations.
Both controllers are applied in a receding-horizon fashion, as in MPC, that is, at each sampling instant the optimization problem is solved with the most recent measurements, and the first input of the optimal policy is applied to the system. 
The simulation spans two days, during which the outside temperature follows a day-night cycle with rising overall temperatures.
The resulting closed-loop trajectories are shown in Figure~\ref{fig:closed_loop}.
During warm periods, the room temperature approaches the upper comfort constraint, and at night it drops toward the lower one.
Since the disturbance feedback policy can safely plan trajectories closer to the constraints, it requires less preemptive heating and cooling than the feedforward formulation.
As a result, optimizing the feedback gain reduces the closed-loop control cost, which mostly reflects the actual heating and cooling effort, by $77\%$.

\begin{figure}
    \centering
    \begin{subfigure}[]{\linewidth}
        \includegraphics[width=\linewidth]{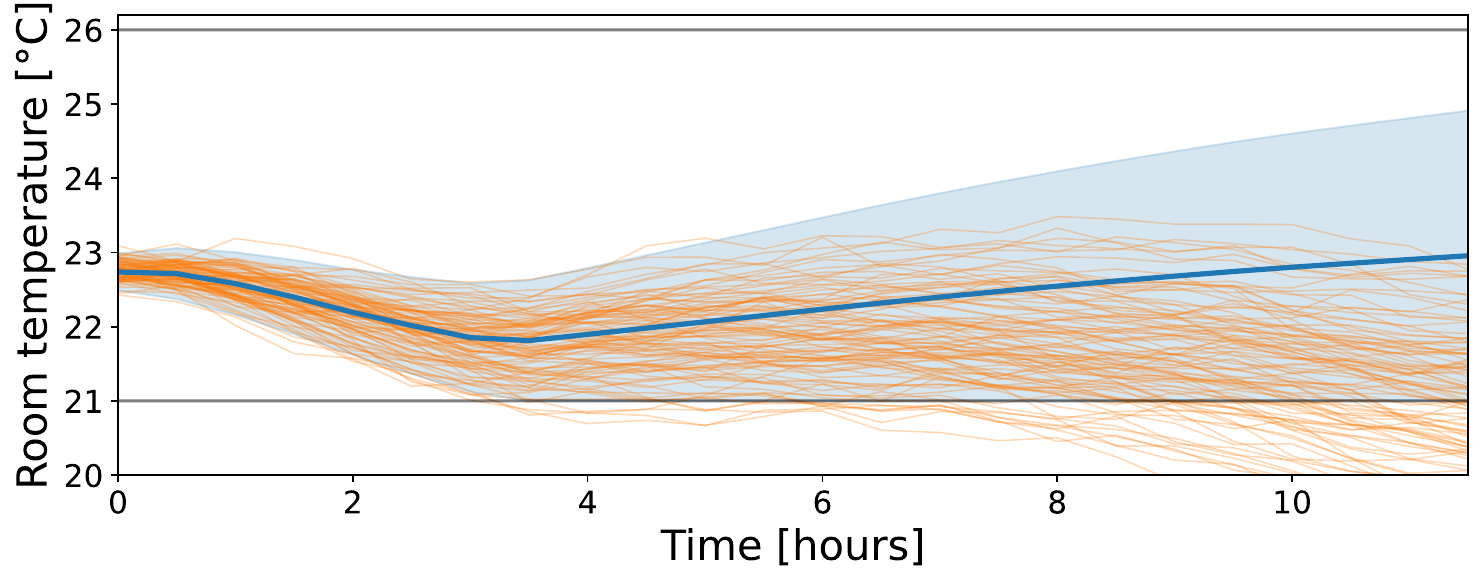}
    \end{subfigure}
    \begin{subfigure}[]{\linewidth}
        \includegraphics[width=\linewidth]{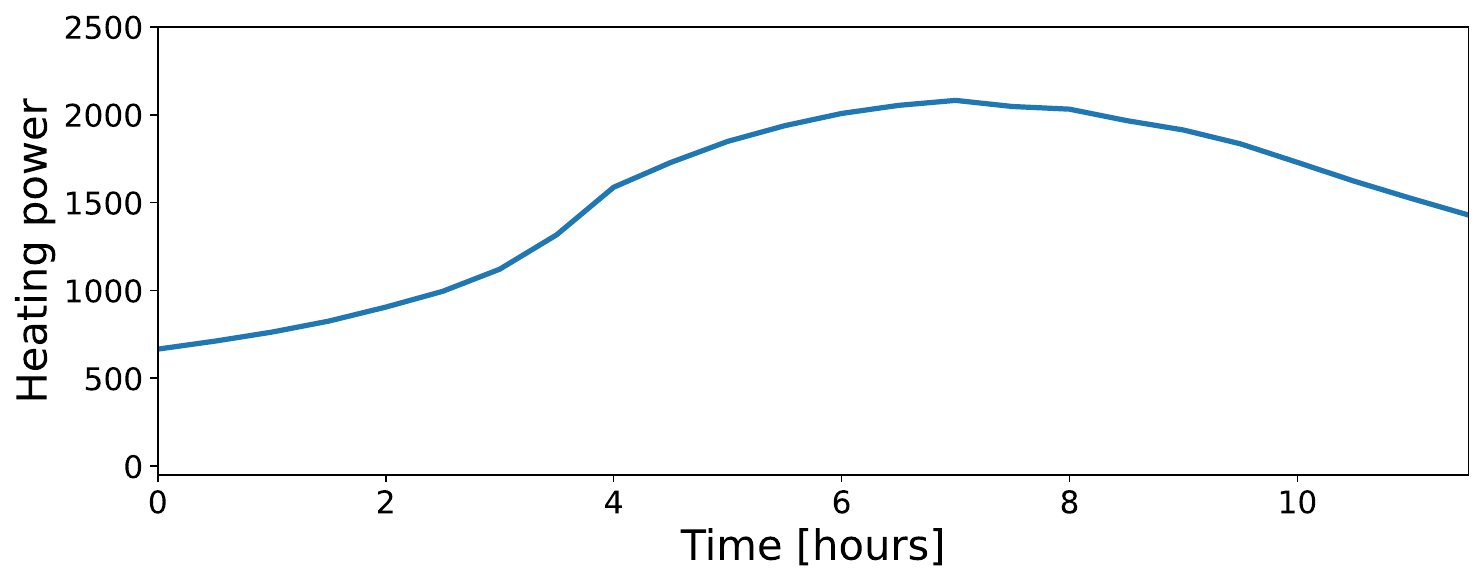}
    \end{subfigure}
    \caption{Open-loop prediction with the feedforward input policy~\eqref{eq:stoch_control}. Top: predicted mean room temperature (blue solid line) and confidence bounds ($95\%$) computed from $\Sph$ (blue shaded region), together with $100$ realizations of the true room temperature (orange thin lines).
    Bottom: heating power applied to the system.}
    \label{fig:open_loop-nofb}
\end{figure}

\begin{figure}
    \centering
    \begin{subfigure}[]{\linewidth}
        \includegraphics[width=\linewidth]{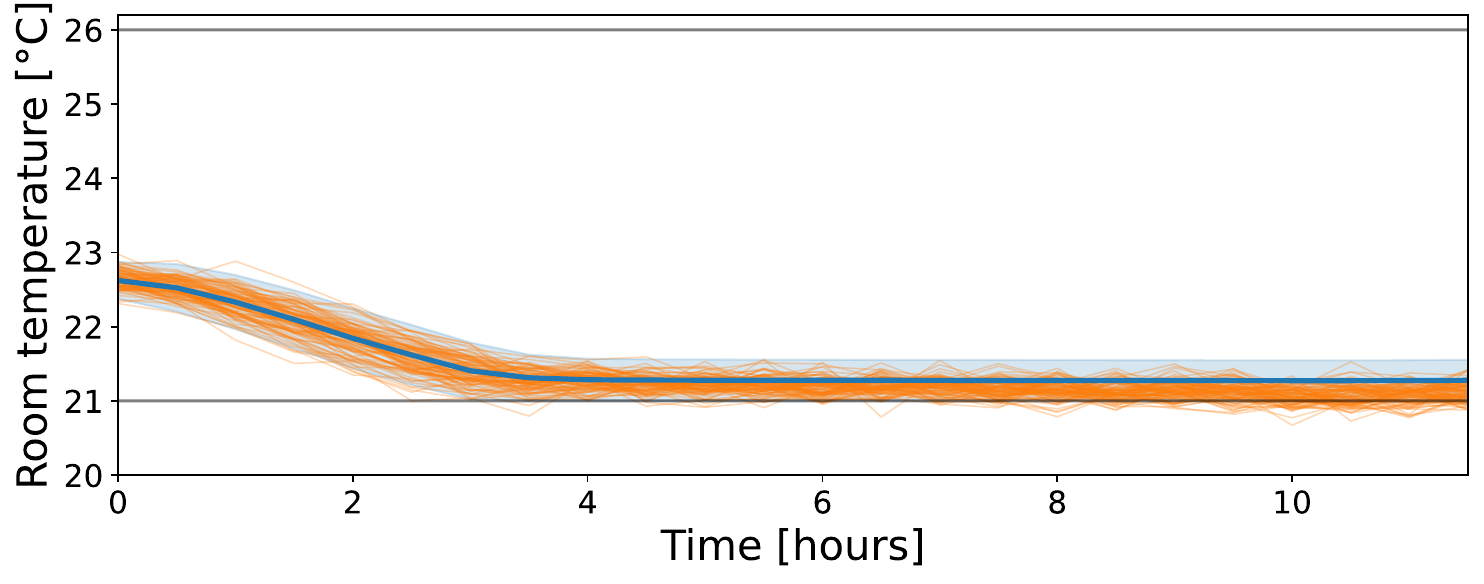}
    \end{subfigure}
    \begin{subfigure}[]{\linewidth}
        \includegraphics[width=\linewidth]{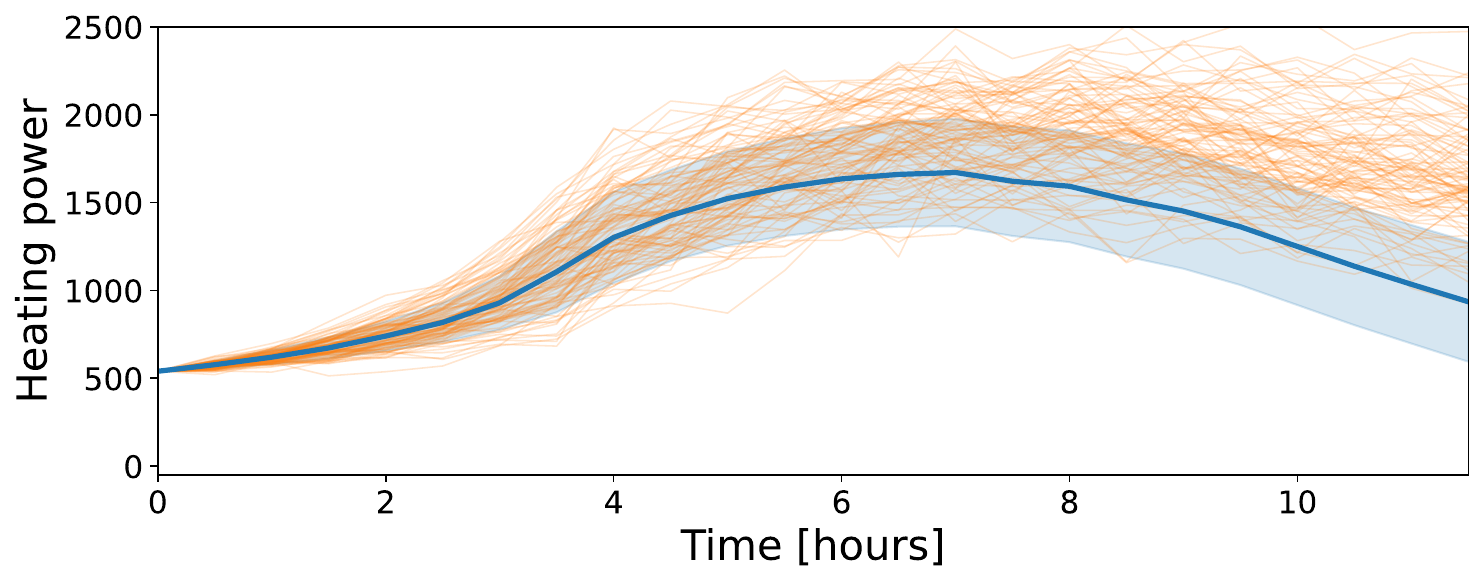}
    \end{subfigure}
    \caption{Open-loop prediction with the optimized disturbance affine feedback policy~\eqref{prb:feedback_policy}. Top: predicted mean room temperature (blue solid line) and confidence bounds ($95\%$) computed from $\Sph$ (blue shaded region), together with $100$ realizations of the true room temperature (orange thin lines).
    Bottom: predicted mean (blue solid line) and realizations (orange thin lines) of the heating power, which is stochastic due to the disturbance feedback.}
    \label{fig:open_loop-fb}
\end{figure}

\begin{figure}
    \centering
    \begin{subfigure}{\linewidth}
        \includegraphics[width=\linewidth]{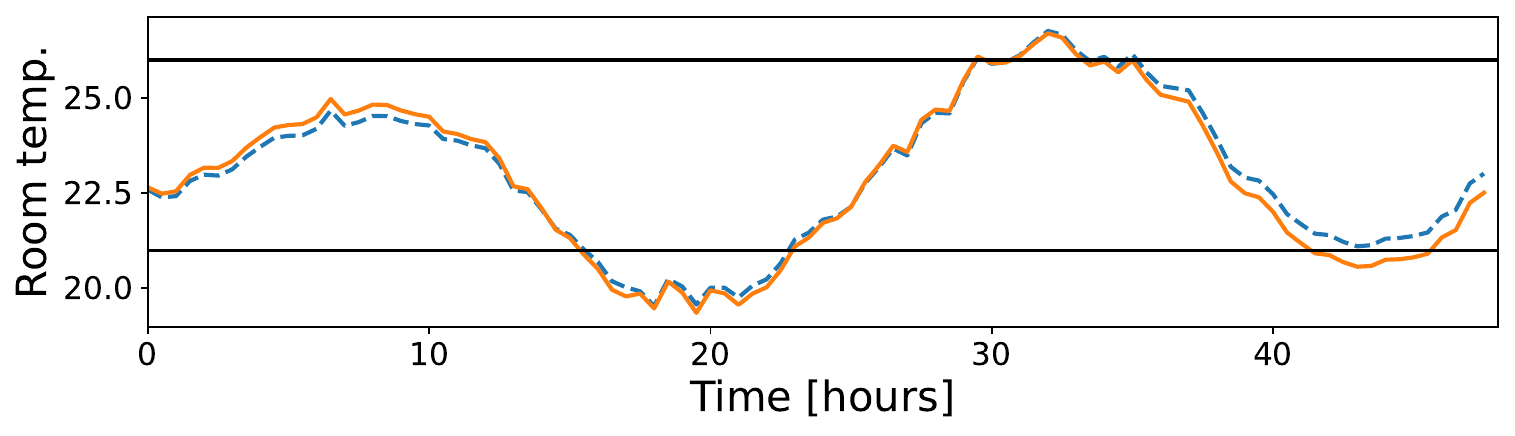}
    \end{subfigure}
    \begin{subfigure}{\linewidth}
        \includegraphics[width=\linewidth]{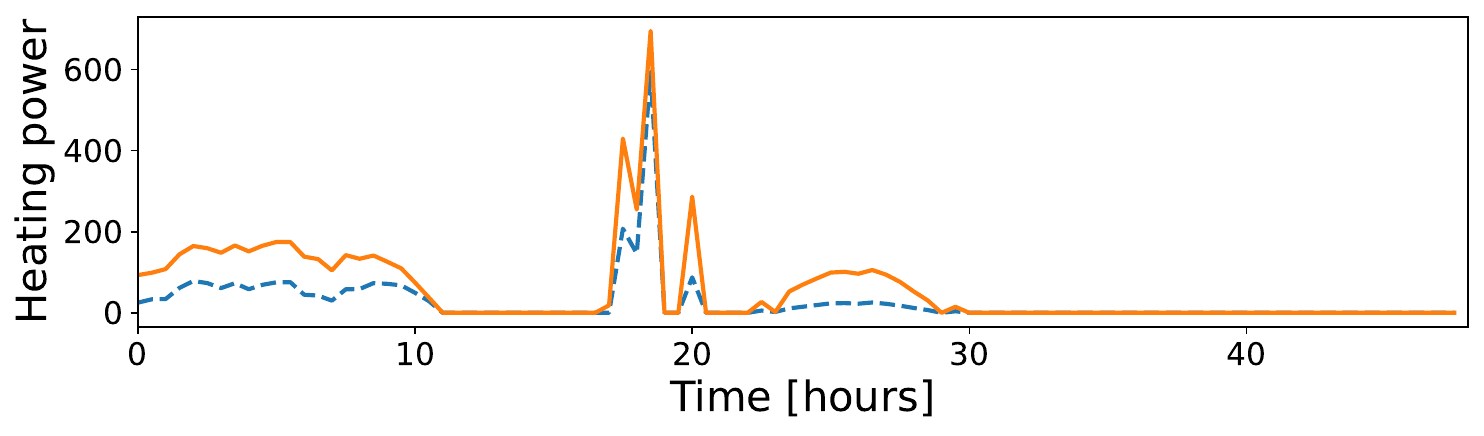}
    \end{subfigure}
    \begin{subfigure}{\linewidth}
        \includegraphics[width=\linewidth]{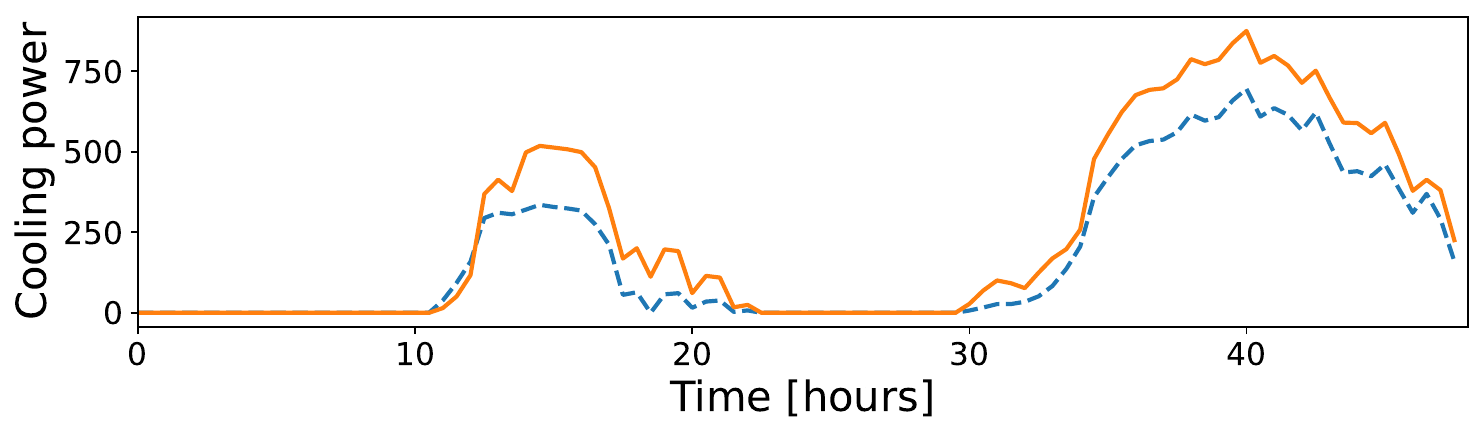}
    \end{subfigure}
    \caption{Closed-loop operation over two days with the feedforward policy~\eqref{eq:stoch_control} (orange solid) and the disturbance feedback policy~\eqref{prb:feedback_policy} (blue dashed). Top: room temperature and comfort constraints (black lines). Middle and bottom: applied heating and cooling power, respectively.}
    \label{fig:closed_loop}
\end{figure}

\section{Conclusion} \label{sec:conclusion}
We introduced Gaussian behaviors, a stochastic extension of behavioral systems theory that augments a deterministic \gls{LTI} behavior with a Gaussian noise component.
From this framework we derived a data-driven method for prediction which uses the sample covariance of trajectory data and delivers finite-sample bounds on the prediction uncertainty.
Building on this method, we proposed stochastic data-driven predictive control schemes, including a chance-constrained formulation over disturbance affine feedback policies and a robust formulation that minimizes the worst-case cost over a data-driven confidence region, both of which admit tractable convex reformulations. 
Future work includes relaxing the independence assumption on the offline data trajectories to accommodate more data-efficient formulations, and developing online updates of the Gaussian behavior to enable adaptive data-driven control.

\section*{References}
\bibliographystyle{ieeetr}        
\bibliography{Literature}

\end{document}